\newcommand{\Hilb}[1]{\mathcal{H}^{#1}}
\newcommand{\LinOp}[1]{\mathcal{L}\left( #1 \right)}
\newcommand{\MapX}[2]{ \frac{\mathds{1}^{#1}}{d_{#1}} \otimes \Tr_{#1}\left[ #2 \right] }
\newcommand{\TrX}[2]{\mathrm{Tr}_{#1}\left[ #2 \right]}
\newcommand{\InProd}[2]{\left\langle #1 \left. ,\, #2 \right. \right\rangle}
\newcommand{\GGB}[2]{\sigma^{#1}_{#2}}
\newcommand{\Proj}[2]{\mathcal{P}^{#1}_{#2}}
\newcommand{\CompProj}[2]{\prescript{Q}{}{\mathcal{P}}^{#1}_{#2}}
\newcommand{\ProjOn}[3]{\mathcal{P}^{#1}_{#2}\left[#3\right]}
\newcommand{\Party}[3]{#1^{(#2)}_{#3}}
\newcommand{\Dep}[2]{\prescript{}{#1}{#2}}
\newcommand{\DepPar}[2]{\prescript{}{#1}{\left(#2\right)}}
\newtheorem{theo}{Theorem}
\newtheorem{defi}{Definition}
\newtheorem{lemm}{Lemma}
\newtheorem{coro}{Corollary}
\begin{document}

\title{The Multi-round Process Matrix}

\author{Timoth\'{e}e Hoffreumon}%
 \email{hoffreumon.timothee@ulb.ac.be}
 \affiliation{%
 Centre for Quantum Information and Communication (QuIC), {\'E}cole polytechnique de Bruxelles, CP 165, Universit\'e libre de Bruxelles, 1050 Brussels, Belgium.
}%
 \orcid{0000-0001-7014-1958}
\author{Ognyan Oreshkov}%
 \email{oreshkov@ulb.ac.be}
 \affiliation{%
 Centre for Quantum Information and Communication (QuIC), {\'E}cole polytechnique de Bruxelles, CP 165, Universit\'e libre de Bruxelles, 1050 Brussels, Belgium.
}%
 \orcid{0000-0002-9390-1064}

\maketitle

\begin{abstract}
We develop an extension of the process matrix (PM) framework for correlations between quantum operations with no causal order that allows multiple rounds of information exchange for each party compatibly with the assumption of well-defined causal order of events locally. We characterise the higher-order process describing such correlations, which we name the multi-round process matrix (MPM), and formulate a notion of causal nonseparability for it that extends the one for standard PMs. We show that in the multi-round case there are novel manifestations of causal nonseparability that are not captured by a naive application of the standard PM formalism: we exhibit an instance of an operator that is both a valid PM and a valid MPM, but is causally separable in the first case and can violate causal inequalities in the second case due to the possibility of using a side channel.
\end{abstract}

\section{\label{sec:Intro}Introduction}
There has recently been significant interest in quantum processes in which the operations performed by separate parties exhibit `indefinite causal order' \cite{Hardy2007,Chiribella2013,OCB2012,Araujo2014,Brukner2014,Procopio2015,Witness,OG2016,Feix2015,Baumeler_2016,Portmann2015,Feix_2016,Guerin2016,Maclean2017,Expe2017,Allen2017,Kissinger2017,Silva_2017,Abbott2017genuinely,Zych2016,Rubino2017,Goswami2018,Perinotti2016,Bisio2018,Wechs2018,Wei_2019,Procopio,tobar2020reversible}. A formal definition of this feature, termed causal nonseparability \cite{OCB2012,Witness,OG2016,Wechs2018}, has been given in the process matrix (PM) framework \cite{OCB2012}, which describes correlations between elementary quantum experiments, each defined by a pair of input and an output system, also referred to as a \textit{quantum node} \cite{Allen2017,barrett2019quantum,barrett2020cyclic}, over which an agent could apply different operations, without presuming the existence of global causal order among the separate operations but only the validity of causal quantum theory \cite{Chiribella_2011} for their local description. Causally nonseparable processes have been shown to accomplish informational tasks that are not possible via processes in which the operations are used in a defined order \cite{Chiribella2012,Araujo2014,Feix2015,Guerin2016,Procopio}. They have been conjectured to be potentially relevant in quantum gravity scenarios \cite{Hardy2007,Chiribella2013,OCB2012,Zych2016} where the causal structure of spacetime may be subject to quantum indefiniteness, as well as in the presence of closed timelike curves \cite{Chiribella2013,OCB2012,Brukner2014,Araujo_2017,Baumeler_2019,tobar2020reversible}. But some of these processes, such as the quantum SWITCH \cite{Chiribella2013} for which most of the examples of advantage have been found, also admit realisations within standard quantum mechanics on time-delocalized systems \cite{Oreshkov2018} via coherent control of the order of operations. This has been demonstrated in several experimental setups \cite{Procopio2015,Rubino2017,Expe2017,Maclean2017,Goswami2018,Wei_2019}, offering blueprints for possible applications. 

In view of developing potential applications of indefinite causal order, the standard PM framework appears limited by the fact that each party is assumed restricted to a single round of information exchange, where from the local causal perspective of that party \cite{Guerin_2018}, information is received once via the input system and subsequently sent out once via the output system. Practical communication protocols and distributed computing algorithms generally involve multiple rounds of information exchange between separate parties that can use local memory and condition the operations applied at a given time on information obtained at other times. It is therefore natural to ask whether the PM framework can be extended to allow for multiple rounds of communication and whether such an extension would contain any new possibilities that are not captured by the standard PM framework.

In this paper, we formulate a multi-round extension of the PM framework in which each party can perform operations over an ordered sequence of quantum nodes, assuming standard causality locally but not globally. The formalism is analogous to the PM formalism, except that the local operations are now generalised to quantum networks \cite{Chiribella2009}---the most general causal operations that can be applied over a given sequence of input and output systems, and which are described by the theory of quantum combs \cite{Chiribella2009}---while the process matrix is replaced by an operator that we call the \textit{multi-round process matrix} (MPM). The MPM is a specific higher-order process in the hierarchy of higher order processes classified by Perinotti and Bisio \cite{Perinotti2016,Bisio2018}. We derive handy necessary and sufficient conditions for an operator to be a valid MPM, which are expressed via a generalisation of the projector techniques introduced for PMs in Ref. \cite{Witness}. Given a set of parties and an ordered set of nodes for each party, a valid MPM is in particular a valid PM on all nodes, if each node is regarded as belonging to a separate party. However, it respects additional constraints that ensure the possibility of using side channels for causal communication between the nodes within the laboratory of each actual party. As we show, these constraints amount to the condition that the operator of a valid MPM is an affine combination of deterministic quantum combs that are compatible with the local orders of the nodes assumed for each different party. 

We further formulate a notion of causal separability in the MPM framework, building upon the one defined for PMs \cite{OG2016,Wechs2018}. The key insight behind this definition is that since every quantum comb can be implemented as a sequence of independent operations connected via side channels, by viewing these side channels as `external' to the parties' operations, we can reduce the problem of defining causal separability to that for standard PMs. Remarkably, we show that the possibility of using side channels can make a radical difference on the causal (non)separability of the process: we describe an example of a bipartite MPM with two nodes for Alice and one node for Bob, which is such that, when viewed as a standard PM on three nodes it is causally separable (and admitting a simple physical realisation), but when the possibility of using a side channel in the presumed direction between the nodes of Alice is considered, this opens up the possibility of violating causal inequalities. This shows that the standard PM framework, applied naively on the nodes over which the parties can operate, does not suffice to capture the causal nonseparability of the process on those nodes: the in-principle possibility of using side channels, which underlies the MPM concept, can have nontrivial consequences.
\section{\label{sec:theory}The framework}
Consider a set of parties, $\mathfrak{N}\equiv \left\{A,B,C,\ldots\right\}$, of finite cardinality $\left|\mathfrak{N}\right|$. In the original process matrix (PM) framework \cite{OCB2012}, each party $X$ is assumed associated with a pair of finite-dimensional input and output quantum systems, $X_{in}$ and $X_{out}$ with respective Hilbert spaces $\mathcal{H}^{X_{in}}$ and $\mathcal{H}^{X_{out}}$, hereby referred to as a \textit{(quantum) node} \cite{Allen2017,barrett2019quantum,barrett2020cyclic}. The party $X$ can perform an arbitrary causal quantum operation (also called \textit{quantum instrument}) from the input to the output. This is described by a collection of completely positive (CP) maps \cite{Nielsen2009} corresponding to the different possible outcomes $i_X$ of the operation, $\left\{ \mathcal{M}_{i_X}^{X_{in}\rightarrow X_{out} }\right\}$, $\mathcal{M}^X_{i_X}: \mathcal{L}(\mathcal{H}^{X_{in}})\rightarrow \mathcal{L}(\mathcal{H}^{X_{out}})$, where $\mathcal{L}(\mathcal{H})$ denotes the space of linear operators over $\mathcal{H}$. These CP maps obey the constraint that $\mathcal{M}^{ X_{in}\rightarrow X_{out} } \equiv \sum_{i_X}\mathcal{M}_{i_X}^{X_{in}\rightarrow X_{out} } $ is a completely positive and trace-preserving (CPTP) map. 

In the PM formalism, CP maps are represented via positive semidefinite operators using the Choi-Jamio{\l}kowski (CJ) isomorphism \cite{Jamiolkowski1972,Choi1975}. Here, we take a basis-independent version of the isomorphism following \cite{Allen2017,barrett2019quantum,barrett2020cyclic} (mudulo an overall transposition) for which a CP map $\mathcal{M}^{X_{in}\rightarrow X_{out} }$ is represented by a CJ operator $M^{X_{in}X^*_{out}} \in \mathcal{L}\left(\mathcal{H}^{X_{in}} \otimes (\mathcal{H}^{X_{out}})^*\right)$, where $(\mathcal{H}^{X_{out}})^*$ is a copy of the dual of $\mathcal{H}^{X_{out}}$ \footnote{Note that this copy of the dual of the output Hilbert space can be seen related to the time-reversed version of the output Hilbert space \cite{OreshkovCerf2016} via a concrete linear map, which was incorporated in the time-neutral generalisation of the formalism developed in Ref. \cite{OreshkovCerf2016}. We do not invoke this here for simplicity.}\nocite{OreshkovCerf2016}, via %
\begin{multline}\label{eq:CJ}
	M^{X_{in}X_{out}^*} \equiv \\ 
	\left(\sum_{i,j} \ \mathcal{M}^{X_{in}\rightarrow X_{out} }(\ket{i}\bra{j}^{X_{in}}) \otimes \ket{i}\bra{j}^{X^*_{in} }\right)^{\textrm{T}}\,,
\end{multline}
where $\left\{ \ket{i}^{X_{in}} \right\}$ is an orthonormal basis of $\mathcal{H}^{X_{in}}$, $\left\{ \ket{i}^{X^*_{in}} \right\}$ the corresponding dual basis, and we identify $(\mathcal{H}^*)^*$ with $\mathcal{H}$ via the canonical isomorphism. For a CPTP map $\mathcal{M}^{X_{in}\rightarrow X_{out}}$, $\Trace_{X_{out}^*}[M^{X_{in}X_{out}^*}] = \mathds{1}^{X_{in}}$. 

Given a choice of instrument at each node, the joint probabilities for the outcomes of all parties are then given by the `generalised Born rule' 
\begin{gather}
p(i_A,i_B, \ldots| \{\mathcal{M}^{A_{in}\rightarrow A_{out}}_{i_A}\}, \{\mathcal{M}^{B_{in}\rightarrow B_{out}}_{i_B}\}, \ldots) = \notag\\
 \textrm{Tr}\left[W^{A_{in}A^*_{out}B_{in}B^*_{out}\ldots} \left(M^{A_{in}A^*_{out}}_{i_A}\otimes M^{B_{in}B^*_{out}}_{i_B}\ldots \right)\right],\label{eq:PMBorn}
\end{gather}
where $W^{A_{in}A^*_{out}B_{in}B^*_{out}\ldots}\in \mathcal{L}( \mathcal{H}^{A_{in}}\otimes (\mathcal{H}^{A_{out}})^* \otimes \mathcal{H}^{B_{in}}\otimes (\mathcal{H}^{B_{out}})^* \ldots)$. This follows simply from the assumption that, as in standard quantum theory, the probabilities are linear functions of the CP maps corresponding to the outcomes. The operator $W^{A_{in}A^*_{out}B_{in}B^*_{out}\ldots}$ is called the \textit{process matrix (PM)} (also \textit{process operator} \cite{barrett2019quantum, barrett2020cyclic}). The only conditions it has to satisfy are $W^{A_{in}A^*_{out}B_{in}B^*_{out}\ldots}\geq 0$ (coming from the requirement of non-negativity of the probabilities, assuming the parties' operations can be extended to act on local input ancillas prepared in arbitrary joint quantum states), and $\textrm{Tr}\left[W^{A_{in}A^*_{out}B_{in}B^*_{out}\ldots} \left(M^{A_{in}A^*_{out}}\otimes M^{B_{in}B^*_{out}}\ldots \right)\right]=1$ on all CPTP maps $M^{A_{in}A^*_{out}}, M^{B_{in}B^*_{out}}, \ldots$ (coming from the requirement that probabilities sum up to 1). Practical necessary and sufficient conditions for an operator to be a valid PM have been formulated based on the types of nonzero terms appearing in the expansion of the operator in a Hilbert-Schmidt basis \cite{OCB2012, OG2016}, as well as based on a superoperator projector \cite{Witness}. In the next section, we will use the latter, of which we will derive a generalisation. A review of it can be found in Appendix \ref{app:projo}.
\begin{figure}
 \centering
 \includegraphics[height=.9\linewidth]{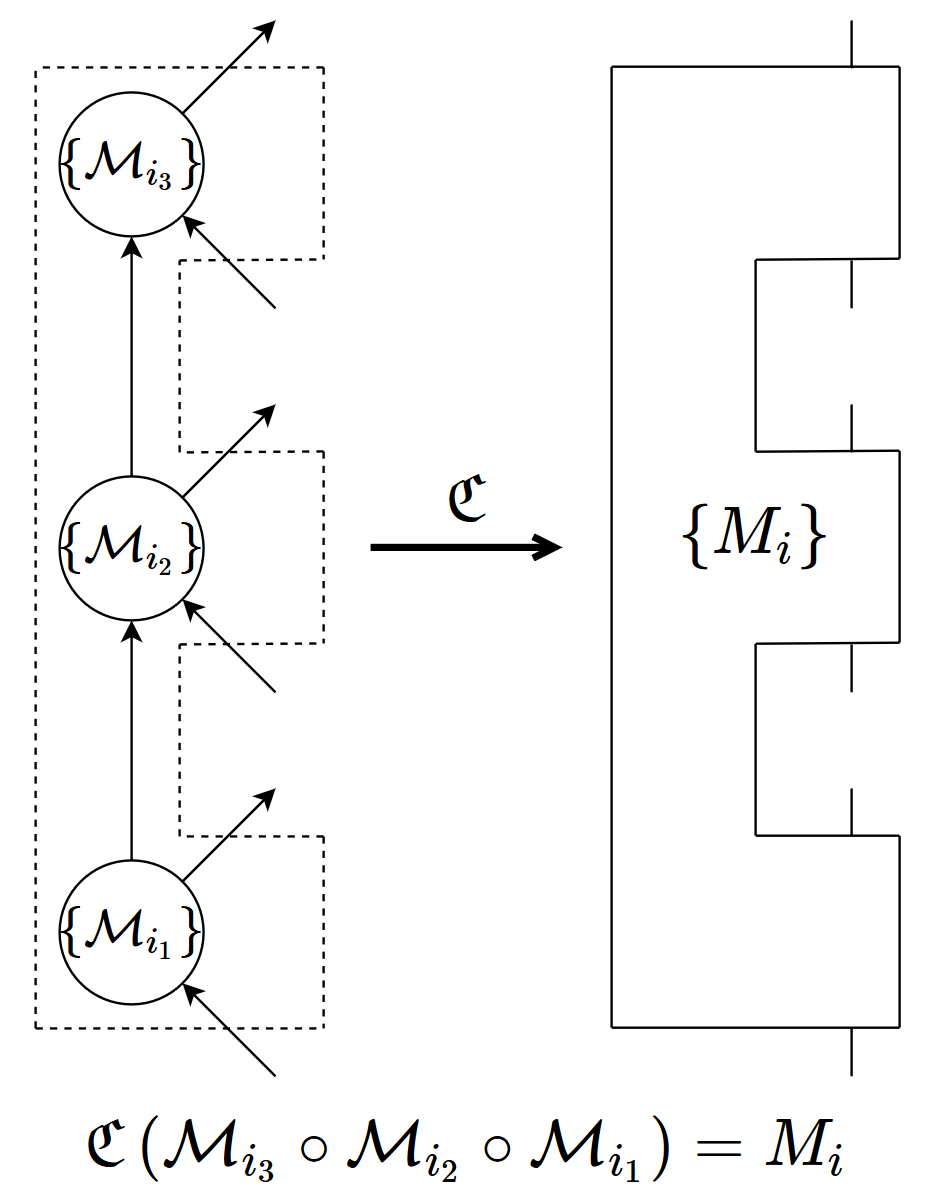}
 \caption{Graphical representation of a quantum network with 3 nodes (plain circles), and its associated quantum 3-instrument obtained through CJ isomorphism $\mathfrak{C}$ applied on the composition of the operations at each node.}
 \label{fig:QuComb}
\end{figure}

Formula \eqref{eq:PMBorn} can be interpreted as the result of composing in a loop the local operations of the parties with a channel $\tilde{\mathcal{W}}$ from the output systems of all nodes to the inputs systems of all nodes: the PM is the \textit{transpose} of the CJ operator of $\tilde{\mathcal{W}}$, where the transposition simply reflects the \textit{link product} \cite{Chiribella2009} for composing channels in the CJ representation. 

Here, we generalise the framework by relaxing the assumption that each party is restricted to a single round of receiving a system in and sending a system out. We now assume that each party is associated with an ordered sequence of quantum nodes over which they can operate in a causal fashion. Let $n_X$ denote the number of nodes for party $X$. There are then $2n_X$ systems associated with $X$. We will label them by $X_{j}$, $j=0,\ldots 2n_X-1$, with $X_0$ and $X_1$ being respectively the input system and the \textit{dual} of the output system of the first node, $X_2$ and $X_3$ the input system and the dual of the output system of the second node, and so on, with even (respectively, odd) numbering referring to an input (dual of output) system. The $i$-th node of $X$ will be compactly denoted by $X^{(i)}$. In other words, for $i= 1,\ldots, n_X$, with $X_{i,in}$ (respectively, $X_{i,out}$) referring to the input (output) of the $i$-th node we define
	\begin{subequations}
	\begin{align}
	&\Hilb{X_{2i-2}} \equiv \Hilb{X_{i,in}} \quad, \\
	&\Hilb{X_{2i-1}} \equiv \left(\Hilb{X_{i,out}}\right)^* \quad, \\
	&\Hilb{X^{(i)}} \equiv \Hilb{X_{i,in}} \otimes \left( \Hilb{X_{i,out}}\right)^* \quad.
	\end{align}
	\end{subequations}

The most general causal quantum operation on a sequence of $n$ nodes is a non-deterministic quantum network \cite{Chiribella2009}, which can be implemented as a sequence of $n_X$ instruments acting on the given nodes plus ancillary systems connected via side channels, as illustrated on the left-hand side of Fig. \ref{fig:QuComb} for the case of 3 nodes. Such a network amounts to implementing a quantum operation from the joint input system of all nodes to the joint output system of all nodes, called an \textit{n-instrument} \cite{Chiribella2009}. In the CJ representation, an $n_X$-instrument is described by a collection of positive semidefinite (PSD) operators $\{M^{X_0X_1\cdots X_{2n_X-1} }_i\}$, defined on $\mathcal{L}(\mathcal{H}^{X_0}\otimes \mathcal{H}^{X_1} \ldots\otimes \mathcal{H}^{X_{2n_X-1}})$. We will shorten the notation using $X_0X_1\cdots X_{2n_X-1} \equiv X$  to lessen clutter when there is no ambiguity, writing $\left\{M^X_i\right\} \in \LinOp{\Hilb{X}}$. Each $M^{X}_i$ is labelled by the outcome index $i$, which in this case is a poly-index corresponding to the collection of outcomes at the different steps in the network, $i\equiv (i_{\Party{X}{1}{}}, \ldots, i_{\Party{X}{n_X}{}})$. The operators $M^{X}_i$ are the CJ operators of CP maps from the joint input system of all nodes to the joint output system of all nodes, called \textit{probabilistic quantum ($n_X$-)combs} \cite{Chiribella2009}. These operators must satisfy the condition that $M^X \equiv \sum_i M^X_i$ is a \textit{deterministic quantum ($n_X$-)comb} \cite{Chiribella2009}. The latter is the CJ operator of a deterministic $n_X$-instrument, \textit{i.e.} the most general CPTP map from the joint input system of the nodes to the joint output system of the nodes that can be implemented via a network of channels as in the left-hand side of Fig. \ref{fig:QuComb}. An operator $M^X$ is a deterministic quantum $n_X$-comb if and only if it obeys the following constraints \cite{Chiribella2009}:%
\begin{subequations}\label{eq:QuComb_Chiri}
    \begin{gather}
    M^{X_0\cdots X_{2n_X-1}}\geq 0 \,, \label{eq:QuComb_Chiri_pos}\\
    \begin{split}
    \forall j\, : \, 2\leq j \leq n_X ,&\\
    \TrX{X_{2j-1}}{M^{X_0\cdots X_{2j-1}}} &=
    \mathds{1}^{X_{2j-2}} \otimes M^{X_0 \cdots X_{2j-3}},\\
    \TrX{X_1}{M^{X_0X_1}} &= \mathds{1}^{X_0}\,,
    \end{split}\label{eq:QuComb_Chiri_subspace} %
    \end{gather}
\end{subequations}
where $\mathds{1}^Y$ denotes the unit matrix on subsystem $Y$.

Analogously to the PM framework, we assume that the joint probabilities for the outcomes of the generalised instruments performed by the separate parties in this multi-round setting are linear functions of the probabilistic quantum combs associated with the outcomes, as would be the case in standard quantum mechanics if the nodes are associated with quantum systems at definite times. This means that we can write these probabilities in the `generalised Born rule' form
\begin{gather}
p(i_A,i_B, \ldots| \{ M^{A} _{i_A}\}, \{{M}^{B}_{i_B}\}, \ldots) \notag\\
= \textrm{Tr}\left[W^{AB\ldots}\cdot \left(M^{A}_{i_A}\otimes M^{B}_{i_B}\ldots \right)\right],\label{eq:MPMBorn}
\end{gather}
where for the superscripts that label the systems over which the operators are defined we have used the short-hand notation $X\equiv X_0X_1\ldots X_{2n_X-1}$, $X\in\mathfrak{N}$. We call the operator $W^{AB\ldots}$ the \textit{multi-round process matrix (MPM)} (see Fig. \ref{fig:MPM_example}) . 

\begin{figure}
\centering
\includegraphics[width=.9\linewidth]{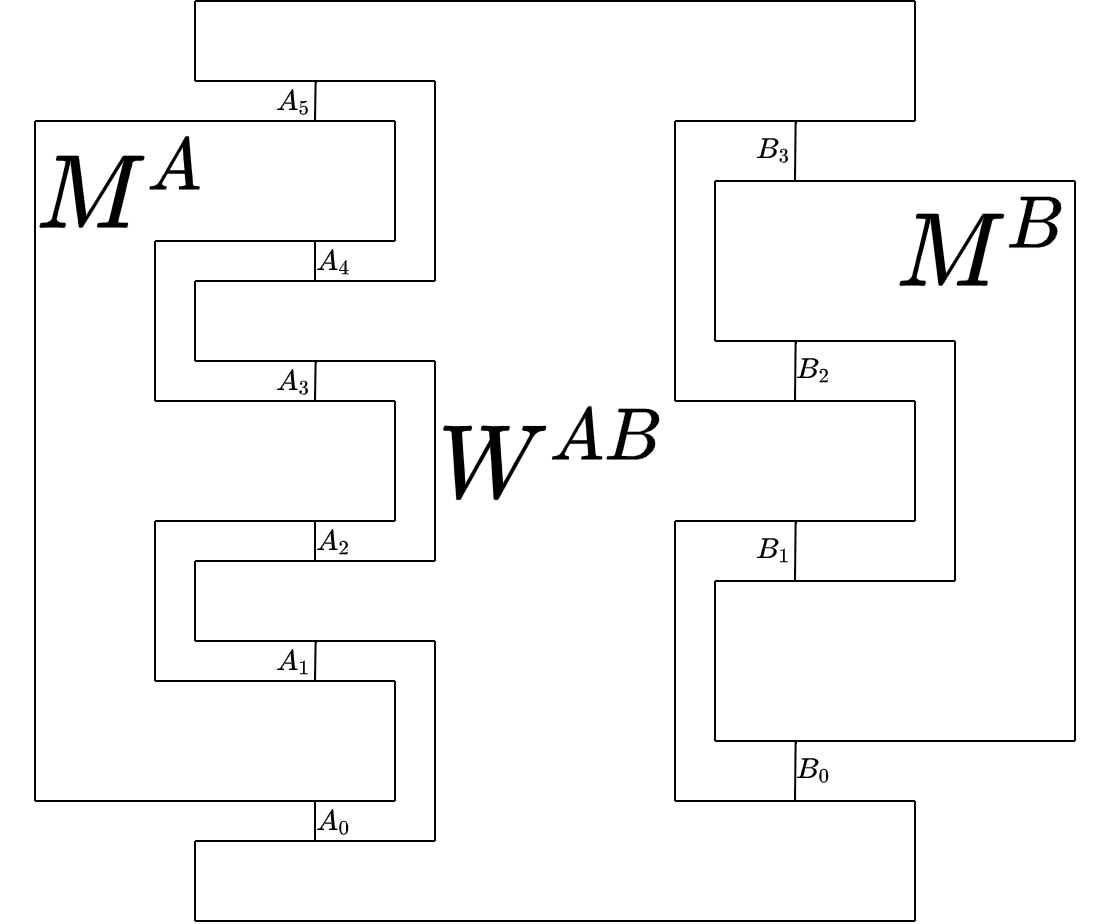}
\caption{Graphical representation of an MPM for 2 parties A and B. A has 3 rounds and B has 2. Here, the name of the subsystems associated with each wire is apparent. By convention, the ordering of the indices goes from bottom to top}
\label{fig:MPM_example}
\end{figure}

As in the PM framework, the only property we demand from $W^{AB\ldots}$ is that it yields valid probabilities through Eq. \eqref{eq:MPMBorn} for all generalised instruments that can be applied by the parties, including when the instruments are extended to act on ancillary input systems in arbitrary quantum states. It is straightforward to see in analogy to the argument in Ref. \cite{OCB2012} that this is equivalent to the following constraints:
\begin{gather}
W^{AB\ldots} \geq 0 \, ,\label{eq:MPM_pos}
\end{gather}
and
\begin{gather}
\textrm{Tr}\left[W^{AB\ldots}\cdot \left(M^{A}\otimes M^{B}\ldots \right)\right] = 1 \label{eq:MPM_norm}
\end{gather}
for all deterministic quantum combs $M^{A}, M^{B}, \ldots$.

\begin{figure}
 \centering
 \includegraphics[height=.9\linewidth]{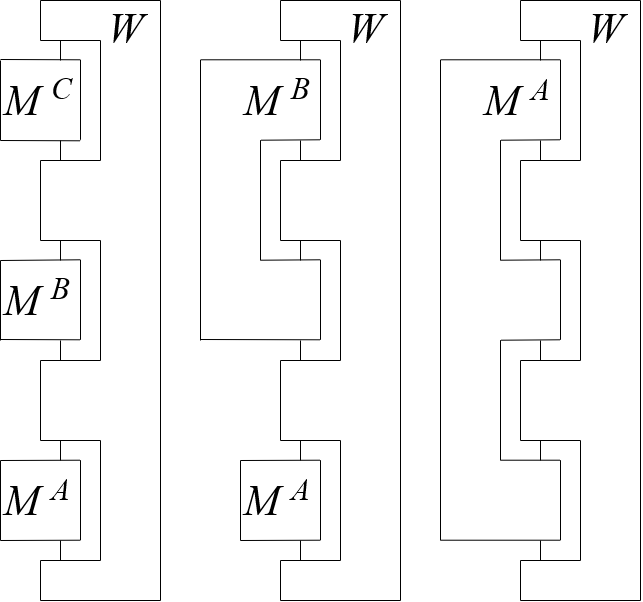}
 \caption{Graphical representations of an MPM $W$ for: (left) 3 parties with a single node each [$W$ is a PM in this case]; (center) a party $A$ having a single round and $B$ having two rounds; (right) a single party $A$ with 3 rounds [$W$ is a deterministic quantum comb in this case].}
 \label{fig:PMvMPMvComb}
\end{figure}

\textit{Remark 1.} In the case when the number of nodes per party is 1, $W^{AB\ldots}$ reduces to a standard process matrix, whereas when the number of parties is 1 and that party has $n$ nodes, $W^{A}$ reduces to (the transpose of) a deterministic quantum $(n+1)$-comb with a trivial first input system and a trivial last output system \cite{Chiribella2009}. See Fig. \ref{fig:PMvMPMvComb}.


\textit{Remark 2.} Note that if an operator $W^{AB\ldots}$ is a valid MPM, it is in particular a valid PM if each different node is interpreted as belonging to a separate party. This is because the probabilities must be equal to 1 when the parties perform independent CPTP maps in their different nodes, which is a special case of a deterministic network. The converse, however, is obviously not true since a PM is not normalised on deterministic quantum combs on arbitrary ordered subsets of its nodes (\textit{e.g.} if the PM is equivalent to a deterministic quantum comb for a specific order of the nodes, it would generally not admit compositions with deterministic quantum combs in the opposite order).

\section{Characterisation of the set of valid MPMs\label{sec:charactMPM}}
We now derive necessary and sufficient conditions for an operator to be a valid MPM that are easier to handle. They can be seen as a generalisation of the projective characterisation of the set of PMs obtained in Ref. \cite{Witness}, where the authors showed that the set of valid PMs belongs to a subspace of the space of operators, the projector on which can be deduced from the one on valid quantum channels. 

Our approach goes as follows. Given the projective characterisation of the space of valid quantum channels in the CJ picture \cite{Witness}, we can iteratively infer a similar projective characterisation of the space of deterministic quantum combs. This is done in Appendix \ref{app:CombCd}, and we find that all the deterministic quantum combs $M^X$, no matter their number of nodes (or `teeth'), have the same algebraic structure: 
\begin{subequations}\label{eq:QuComb}
    \begin{align}
    &M^X \geq 0 \quad,\\
    &\ProjOn{X}{n_X}{M^X}=M^X \quad,\\
    &\TrX{}{M^X} = \prod_{j=0}^{n-1} d_{X_{2j}} \equiv  d_{X_{in}} \,. \label{eq:QuComb_norm}
    \end{align}
\end{subequations}
That is, they are elements of a positive and trace-normalised subset within a subspace of the space of operators on the full system. The subspace is defined through a projector $\Proj{X}{n_X}$, whose superscript refers to the system it acts upon (we will refine the signification of the superscript in the next section) and whose subscript indicates the total number of nodes. As a function of $X$, the exact form of the projector is given by the recursive relation 
\begin{equation}\label{eq:ProjComb}
        \mathcal{P}_{n_X}^{X} =\mathcal{I}^X - \DepPar{X_{2n_X-1}}{\cdot} + \DepPar{X_{2n_X-2}X_{2n_X-1}}{\Proj{X'}{n_X-1}} \, ,
\end{equation}
expressed in terms of the mapping $\DepPar{X_i}{\cdot} \equiv \MapX{X_i}{\cdot}$ introduced in Ref. \cite{Witness} (see Appendix \ref{app:projo}) and the identity projector $\mathcal{I}^X\left[O \right] = O \,, \forall \; O \in \LinOp{\Hilb{X}}$. $\Proj{X'}{n_X-1}$ is the $(n_X-1)$-comb projector that acts on subsystems $X' \equiv X_0\ldots X_{2n_X-3}$ (implicitly understood extended by a tensor product with the identity on $X_{2n_X-2}X_{2n_X-1}$). The recursion starts with the 0-comb projector being 1, $\Proj{}{0} \equiv 1$, because the space of 0-combs is the real numbers.

Here, we are interested in characterising the set $\{W\}$ of valid MPMs as defined in Eqs. \eqref{eq:MPM_pos} and \eqref{eq:MPM_norm}. Note that condition \eqref{eq:MPM_norm}, which states that an MPM is normalised on all tensor products of deterministic quantum combs, is equivalent to the condition that an MPM is normalised on all affine combinations of tensor product of such combs:%
\begin{equation}\label{eq:MPM_norm_affine}
    \TrX{}{W^{AB\ldots}\cdot \sum_i q_i \left(M^{A}\otimes M^{B}\ldots \right)_i}= 1\,, 
\end{equation}
with $\sum_i q_i = 1$ while $\left( M^A \otimes M^B \ldots  \right)_i$ is a tensor product of deterministic quantum combs that may be different for each index $i$. The fact that \eqref{eq:MPM_norm_affine} is necessary for \eqref{eq:MPM_norm} follows from the linearity of the trace, while its sufficiency follows from the fact that \eqref{eq:MPM_norm} is a special case of \eqref{eq:MPM_norm_affine}.

In Appendix \ref{app:Complement}, we show that an operator $M$ is an affine combination of tensor product of $|\mathfrak{N}|$ deterministic combs if and only if it respects the conditions
\begin{subequations}\label{eq:AffineCombs}
    \begin{align}
        &\ProjOn{}{}{M}\equiv \left( \bigotimes_{X\in \mathfrak{N}} \Proj{X}{n_X}\right)[M] = M \,, \label{eq:AffineCombs_proj} \\
        &\TrX{}{M} = \prod_{X \in \mathfrak{N}} d_{X_{in}} \equiv d_{in} \,, \label{eq:AffineCombs_norm}
    \end{align}
\end{subequations}
where we have defined the notations $\Proj{}{}$ for the overall projector and $d_{in}$ for the input dimension of the full Hilbert space. In the following, $\mathrm{Span}\left\{ \bigotimes M \right\}$ will be used to refer quickly to the subspace of operators spanned by a tensor product of deterministic combs, \textit{i.e.} operators satisfying \eqref{eq:AffineCombs_proj}. 

With these considerations, the rationale in Ref. \cite{Witness} of inferring PM validity conditions from the projector on the space of quantum channels and the normalisation can be generalised to MPMs using the projective characterisation \eqref{eq:AffineCombs} of the affine combination of tensor products of deterministic combs. This result is systematised as a theorem:%
\begin{theo}\label{theo:complement}
	Let $\LinOp{\Hilb{}}\equiv \LinOp{\Hilb{in} \otimes \left(\Hilb{out}\right)^*}$ be a linear space of operators defined on a Hilbert space of dimension $d\equiv d_{in}d_{out}$. This Hilbert space admits a tensor factorisation into $2|\mathfrak{N}|$ subsystems associated with the inputs and outputs of $|\mathfrak{N}|$ parties. Let $ \{ W \} \subset \LinOp{\Hilb{}}$ be the set of valid MPMs shared between the parties. Let $\Proj{X}{n_X}$ be the projector characterising the validity subspace of the deterministic combs that can be applied by party $X$. Let $\mathcal{P} \equiv \bigotimes_{X \in \mathfrak{N}} \Proj{X}{n_X}$ be the tensor product of the different such projectors of all parties. Then, an operator $W$ belongs to the set $ \{ W\}$ if and only if it satisfies
	\begin{subequations}\label{eq:MPM}
        \begin{align}
            &W\geq 0 \,, \label{eq:pos}\\
            &\CompProj{}{}[W] \equiv \left( \mathcal{I} - \mathcal{P} + \mathcal{D}\right)[W] = W\,, \label{eq:complProj}\\
            &\TrX{}{W} =  d_{out}\,, \label{eq:norm}
        \end{align}
    \end{subequations}
	where $\mathcal{I}$ is the identity projector, and $\mathcal{D}$ is the projector on the span of the unit matrix, $\mathcal{D}[O] = \MapX{}{O}$.%
\end{theo}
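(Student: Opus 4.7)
The plan is to adapt the argument of Ref.~\cite{Witness} for standard PMs, replacing its projector on the space of valid quantum channels by $\mathcal{P} = \bigotimes_{X \in \mathfrak{N}} \mathcal{P}^{X}_{n_X}$, which by \eqref{eq:AffineCombs} characterises the affine span of tensor products of deterministic combs. By linearity of the trace, condition \eqref{eq:MPM_norm} is equivalent to \eqref{eq:MPM_norm_affine}, and \eqref{eq:AffineCombs} identifies that affine span as the set of operators $N$ with $\mathcal{P}[N] = N$ and $\textrm{Tr}[N] = d_{in}$. The task therefore reduces to rewriting ``$\textrm{Tr}[W\, N] = 1$ for all such $N$'' as the pair of conditions \eqref{eq:complProj} and \eqref{eq:norm}, complemented by the positivity \eqref{eq:pos}, which is assumed as in the PM case.

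I would then exploit two easily checked facts. First, each building block $\,_{Y}(\cdot) \equiv \mathds{1}^{Y}/d_{Y} \otimes \textrm{Tr}_{Y}[\cdot]$ is trace-preserving and self-adjoint under the trace pairing, so the recursion \eqref{eq:ProjComb} together with the tensor product transfer these properties to $\mathcal{P}$. Second, a short induction on \eqref{eq:ProjComb} shows $\mathcal{P}^{X}_{n_X}[\mathds{1}^{X}] = \mathds{1}^{X}$, so $\mathds{1} \in \mathrm{Im}(\mathcal{P})$ and $N_{0} \equiv \mathds{1}/d_{out}$ lies in the affine span. Any $N$ in the span then decomposes as $N = N_{0} + K$ with $K \in \mathrm{Im}(\mathcal{P})$, $\textrm{Tr}[K] = 0$; self-adjointness gives $\textrm{Tr}[W\, N] = \textrm{Tr}[\mathcal{P}[W]\, N]$, so the normalisation requirement splits into $\textrm{Tr}[\mathcal{P}[W]\, N_{0}] = \textrm{Tr}[W]/d_{out} = 1$, i.e.\ \eqref{eq:norm}, and $\textrm{Tr}[\mathcal{P}[W]\, K] = 0$ for every traceless $K \in \mathrm{Im}(\mathcal{P})$.

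The latter equation forces $\mathcal{P}[W]$ into the orthogonal complement, within $\mathrm{Im}(\mathcal{P})$, of the codimension-one traceless subspace. Since $\mathds{1} \in \mathrm{Im}(\mathcal{P})$ is itself HS-orthogonal to every traceless operator, that complement is one-dimensional and equal to $\mathrm{Span}\{\mathds{1}\}$; combined with \eqref{eq:norm} this pins down $\mathcal{P}[W] = \mathds{1}/d_{in} = \mathcal{D}[W]$, which rearranges to \eqref{eq:complProj}. The converse is immediate: \eqref{eq:complProj} and \eqref{eq:norm} together give $\mathcal{P}[W] = \mathds{1}/d_{in}$, and for any tensor product $N = \bigotimes_{X} M^{X}$ of deterministic combs (so $\mathcal{P}[N] = N$, $\textrm{Tr}[N] = d_{in}$), self-adjointness yields $\textrm{Tr}[W\, N] = \textrm{Tr}[\mathcal{P}[W]\, N] = \textrm{Tr}[N]/d_{in} = 1$; positivity of $W$ and the standard ancilla-extension argument of Ref.~\cite{OCB2012} then deliver non-negative probabilities on all probabilistic combs.

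The main obstacle is the dimensional step identifying $\{K \in \mathrm{Im}(\mathcal{P}) : \textrm{Tr}[K] = 0\}^{\perp} \cap \mathrm{Im}(\mathcal{P})$ with $\mathrm{Span}\{\mathds{1}\}$: it hinges on the facts that $\mathds{1} \in \mathrm{Im}(\mathcal{P})$ (established by induction from \eqref{eq:ProjComb}) and that the trace functional is non-degenerate on $\mathrm{Im}(\mathcal{P})$. Everything else is bookkeeping with the self-adjointness and trace-preservation of $\mathcal{P}$ and with the affine-span characterisation \eqref{eq:AffineCombs}.
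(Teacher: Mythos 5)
Your argument is logically sound and follows the same overall strategy as the paper's proof: reduce the normalisation condition \eqref{eq:MPM_norm} to normalisation on the affine span of tensor products of deterministic combs, then show this forces the traceless part of $W$ out of $\mathrm{Im}(\mathcal{P})$. The execution of the last step differs. You argue basis-free, using self-adjointness and trace preservation of $\mathcal{P}$ and the orthogonal splitting of $\mathrm{Im}(\mathcal{P})$ into $\mathrm{Span}\{\mathds{1}\}$ plus its traceless part to conclude $\mathcal{P}[W]=\mathcal{D}[W]$; the paper instead works in an explicit Hilbert--Schmidt basis, first proving (its Lemma \ref{lemm:P=remover}) that $\mathcal{P}$ acts by deleting basis elements, and then killing the coefficients $w_{\mathbf{i}}$ one at a time. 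Your version is cleaner, but be aware of where the real weight sits: the whole argument hinges on the nontrivial direction of \eqref{eq:AffineCombs}, namely that every operator with $\mathcal{P}[M]=M$ and $\mathrm{Tr}[M]=d_{in}$ genuinely \emph{is} an affine combination of tensor products of deterministic combs --- equivalently, that the deterministic combs affinely span all of $\mathrm{Im}(\mathcal{P})$ rather than some proper affine subset. Without this, ``$\mathrm{Tr}[WN]=1$ on all products of combs'' would not give you ``$\mathrm{Tr}[\mathcal{P}[W]K]=0$ for every traceless $K\in\mathrm{Im}(\mathcal{P})$''. The paper proves exactly this inside the same appendix, by perturbing the normalised identity: for each basis element $\sigma_{\mathbf{a}}$ left intact by $\mathcal{P}$ it constructs a valid deterministic comb $M_{\mathbf{a}}=\mathds{1}/d_{out}+m_{\mathbf{a}}\sigma_{\mathbf{a}}$ with $|m_{\mathbf{a}}|$ small enough to preserve positivity. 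If \eqref{eq:AffineCombs} is granted as an external result your proof is complete; if the theorem is to be proved from scratch, that spanning construction is the one missing ingredient --- the ``dimensional step'' you single out as the main obstacle is in fact the easy part.
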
%
The projector $\CompProj{}{}$ defined in Eq. \eqref{eq:complProj} will be referred to as the \textit{quasiorthogonal} projector to $\Proj{}{}$. The proof of this theorem is presented in Appendix \ref{app:Complement}. It can be understood in a more mathematical terminology as \enquote{the CJ representation of the generalised instruments is a subset of a unital subalgebra which is the quasiorthogonal complement to the subalgebra in which lies the set of valid processes $\{W\}$} \cite{Petz2007,hiai2014}. Equivalently, it says that the traceless part of an MPM is orthogonal (with respect to the Hilbert-Schmidt inner product) to the traceless part of the CPTP maps that are plugged into it \cite{Bisio2018}.

We note that an equivalent characterisation has been independently explored in \cite{Perinotti2016,Bisio2018}, but without using projective methods. In particular, Theorem \ref{theo:complement} corresponds to the Lemma 4 of \cite{Bisio2018}. Indeed, in Perionotti and Bisio's classification of higher-order quantum processes, the MPM is described by a \textit{type} $\left(\left(\mathbf{n}_A \otimes \mathbf{n}_B \otimes ... \right)\rightarrow 1\right)$. That is, a structure taking in a tensor product of quantum combs of types $\mathbf{n}_A, \mathbf{n}_B, \ldots$ and sending them onto the trivial type 1. Theorem \ref{theo:complement} is therefore a direct way of finding the constraints to apply on an operator so that it is the representation of such a type. In a subsequent work, we will present a systematic way of associating a projector to a type structure.

%
\section{Link between comb and MPM subspaces\label{sec:MPM=Combs}}
Using the above projective characterisation, we are able to highlight the link between the validity subspace of deterministic combs with trivial first input and last output and that of MPMs. 
We will need to introduce the symbol $\prec$ to mark the causal relation \enquote{is before}, indicating that a node is in the causal past of another. The projector on the subspace of valid 3-combs with teeth named $A,B,C$ so that $A\prec B \prec C$ will be denoted $\Proj{A \prec B \prec C}{3}$, with the total number of teeth indicated as a subscript and the causal ordering as a superscript. Permutations of the ordering will be denoted $\pi_i\left(A, B, C\right)$, with the subscript referring to a particular permutation, \textit{e.g.} $\pi_0\left(A, B, C\right)  = A \prec B \prec C\,, \pi_1\left(A, B, C\right)  = B \prec A \prec C,\ldots$ . When the letters refer to an ordered set of teeth, for example $A=\left\{\Party{A}{1}{}\right\}$ and $B=\left\{\Party{B}{1}{}\prec \Party{B}{2}{}\right\}$, the permutations are limited to those that are compatible with the the ordering between the elements within each set. So for this particular example, compatibility means that permutations containing a causal relation where $\Party{B}{2}{}$ appears before $\Party{B}{1}{}$ are forbidden, giving $\{\pi_i(A, B)\} = \{\Party{A}{1}{} \prec \Party{B}{1}{} \prec \Party{B}{2}{},\Party{B}{1}{} \prec \Party{A}{1}{} \prec \Party{B}{2}{}, \Party{B}{1}{} \prec \Party{B}{2}{} \prec \Party{A}{1}{}\}$.


Observe that Theorem \ref{theo:complement} states that the projector on the subspace of valid MPMs is quasiorthogonal to the projector on $\mathrm{Span}\left\{ \bigotimes M \right\}$, where deterministic combs in the tensor product are associated with the different parties. For a single party $A$, it is expected that the MPM is a deterministic comb \cite{Chiribella2009} (see Fig. \ref{fig:PMvMPMvComb}, rightmost case). Indeed, it can be verified using Eq. \eqref{eq:ProjComb} that the quasiorthogonal projector to a single comb projector is a comb projector itself; as expected, it sends to the validity subspace of deterministic $(n_A+1)$-combs with trivial first input and last output, and with node ordering compatible with the ordering of the teeth of $A$. By \enquote{compatible}, we mean here that the nodes in the $(n_A+1)$-comb, which are now the gaps between its teeth, are properly ordered, \textit{i.e.} $\Party{A}{1}{} \prec \Party{A}{2}{} \prec \ldots$ . Since we are working with orthogonal projectors, equivalence between projectors is sufficient to prove the equivalence between subspaces.

We can actually use the algebra of these projectors described in Appendix \ref{app:projo} to generalise this observation to any number of parties. For $|\mathfrak{N}|$ parties sharing an MPM with $n=n_A + n_B+\ldots$ nodes, what we find is that the MPM validity subspace is the span of all the deterministic $(n+1)$-combs with trivial first input and last output such that their node ordering is compatible with the local ordering of each of the $|\mathfrak{N}|$ combs that are to be plugged into it. 

For example, if Alice has one operation and Bob two, there are three possible such 4-combs compatible with $\Party{B}{1}{}\prec \Party{B}{2}{}$, having projectors $\CompProj{\Party{A}{1}{} \prec \Party{B}{1}{} \prec \Party{B}{2}{}}{3}$, $\CompProj{\Party{B}{1}{} \prec \Party{A}{1}{} \prec \Party{B}{2}{}}{3}$, and $\CompProj{\Party{B}{1}{} \prec \Party{B}{2}{} \prec \Party{A}{1}{}}{3}$ sending to three different subspaces. The validity subspace of the MPM is then spanned by the union of these three subspaces.

To make this claim explicit, we need to introduce the notion of \textit{union of projectors} \cite{Piziak1999}. For two arbitrary linear projectors $\mathcal{P}$ and $\Proj{'}{}$ acting on the same space, the projector on the span of the union of their subspaces is the union of the projectors given by
\begin{equation}
    \Proj{}{} \cup \Proj{'}{} = \Proj{}{} + \Proj{'}{} - \Proj{}{} \Proj{'}{} \, ,
    \label{eq:UprojCtilde}
\end{equation}
provided they commute. Thus, for $\mathfrak{M},\mathfrak{N} \subset \LinOp{\Hilb{}}$ such that $\mathfrak{M} = \left\{M | \ProjOn{}{}{M}=M\right\}$ and $\mathfrak{N} = \left\{N | \ProjOn{'}{}{N}=N\right\} $, we have $O \in \text{Span}\left\{ \mathfrak{M} \cup \mathfrak{N} \right\} \iff \left(\Proj{}{}\cup \Proj{'}{}\right)[O] = O$.

This allow us to formally state the result as a theorem:
\begin{theo} \label{theo:PV=UPTildeC}%
    For a set of parties $\{A,B,\ldots\}$ with each having possibly more than one node, let $\{\pi_i\}$ denote the set of valid permutations between the nodes as defined above.
    The projector on the linear subspace of valid MPMs shared between the parties is then equivalent to the union of all the quasiorthogonal projectors to the projectors on the linear subspaces of the deterministic quantum combs that respect the partial ordering of the teeth for each party:%
        \begin{equation}\label{eq:PV=UPTildeC}%
            \prescript{Q}{}{\left(\Proj{A}{n_A}\otimes\Proj{B}{n_B}\otimes\ldots\right)} = \bigcup_{\pi_i} \CompProj{\pi_i(A,B, \ldots)}{n_A+n_B+ \ldots} \,.
        \end{equation}
\end{theo}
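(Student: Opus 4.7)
The plan is to reduce the projector identity \eqref{eq:PV=UPTildeC} to a transparent subspace equality by rewriting both sides using Theorem \ref{theo:complement} and the union formula \eqref{eq:UprojCtilde}, and then to prove the resulting equality by double inclusion. First, Theorem \ref{theo:complement} lets me express the left-hand side as $\mathcal{I} - \bigotimes_{X}\Proj{X}{n_X} + \mathcal{D}$ and each summand of the right-hand side as $\CompProj{\pi_i}{n} = \mathcal{I} - \Proj{\pi_i}{n} + \mathcal{D}$, which---as the text already argued for the single-party case---is itself a comb-type projector, onto the subspace of deterministic $(n+1)$-combs with trivial first input and last output and node ordering $\pi_i$. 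I would next verify that the distinct projectors $\Proj{\pi_i}{n}$ pairwise commute: each is, by \eqref{eq:ProjComb}, a polynomial in the mutually commuting ``depolarising'' maps $\DepPar{S}{\cdot}$ for various subsystems $S$, and polynomials in commuting operators commute. This commutativity extends \eqref{eq:UprojCtilde} inductively to $\bigcup_{\pi_i}\CompProj{\pi_i}{n} = \mathcal{I} - \prod_{\pi_i}(\Proj{\pi_i}{n} - \mathcal{D})$. Since rescalings of $\mathds{1}$ are valid combs of every ordering and valid tensor products of local combs, the range of $\mathcal{D}$ sits inside that of every $\Proj{\pi_i}{n}$ and of $\bigotimes_X \Proj{X}{n_X}$, making each $\Proj{\pi_i}{n}-\mathcal{D}$ and $\bigotimes_X \Proj{X}{n_X} - \mathcal{D}$ itself an orthogonal projector. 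The theorem therefore reduces to the subspace identity
\begin{equation*}
    \bigotimes_{X \in \mathfrak{N}} \Proj{X}{n_X} - \mathcal{D} = \prod_{\pi_i} \bigl(\Proj{\pi_i}{n_A+n_B+\ldots} - \mathcal{D}\bigr).
\end{equation*}

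The easy inclusion ``$\subseteq$'' follows by verifying directly that any tensor product of local deterministic combs $M^A\otimes M^B\otimes\cdots$ satisfies the hierarchy \eqref{eq:QuComb_Chiri} for the global $\pi_i$-ordered $(n+1)$-comb for every compatible $\pi_i$: since each local $M^X$ obeys its own hierarchy, the partial traces appearing at each level of the global hierarchy factorise into partial traces of the corresponding local levels, which validate the constraint regardless of the chosen compatible ordering. The reverse inclusion ``$\supseteq$'' is the expected main obstacle: it asserts that every operator simultaneously satisfying every $\pi_i$-ordered comb constraint is necessarily an affine combination of tensor products of local combs. My plan is to attack this by induction on the total number of nodes $n$, using \eqref{eq:ProjComb} to peel off the chronologically last node in each ordering; intersecting over orderings forces each party's last node to be peeled off simultaneously, reducing the claim for $n$ nodes to the claim for $n-1$. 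As a cleaner alternative, since every projector appearing in either side is, by \eqref{eq:ProjComb}, a polynomial in the commuting family $\{\DepPar{S}{\cdot}\}_S$, one could instead expand both sides as explicit polynomials in these generators and verify coefficient-by-coefficient that they agree; a check in the small case $n_A=n_B=1$ shows the cross-terms of the product on the right collapse exactly to the expansion of $\Proj{A}{1}\otimes \Proj{B}{1} - \mathcal{D}$, suggesting that the general case follows by a systematic bookkeeping of subsystems $S$ and signs.
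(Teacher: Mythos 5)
Your reduction of the theorem to the projector identity $\bigotimes_{X}\Proj{X}{n_X} = \prod_{\pi_i}\Proj{\pi_i(A,B,\ldots)}{n}$ is sound and follows the same skeleton as the paper: the extension of the union formula to $\mathcal{I}-\prod_{\pi_i}\Proj{\pi_i}{n}+\mathcal{D}$ (the paper's Lemma \ref{lemm:UPtilde=1-P+D}), the commutativity observation, and the easy inclusion (a tensor product of local deterministic combs is a valid deterministic comb for every compatible global ordering, the paper's Corollary \ref{coro:V=UC.1}) are all correct. The genuine gap is the reverse inclusion, which you yourself flag as \enquote{the expected main obstacle} and then do not prove. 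Your first route---induction peeling off the chronologically last node---rests on the assertion that \enquote{intersecting over orderings forces each party's last node to be peeled off simultaneously}, but this is not an argument: different compatible orderings have different last nodes, so the recursions \eqref{eq:ProjComb} for the various $\Proj{\pi_i}{n}$ remove different subsystems at each level, and showing that their product nevertheless factorises over the parties is precisely the content of the claim being proved. Your second route is verified only for $n_A=n_B=1$ and is left at the level of \enquote{systematic bookkeeping}.

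The paper closes this gap with two ingredients you would need to supply. First, it suffices to take the product over the orderings in which each party's nodes form a contiguous block (in the bipartite case just $A\prec B$ and $B\prec A$), since by Corollary \ref{coro:V=UC.1} that product already lies inside every other compatible ordering's subspace, so the remaining factors are absorbed. Second, and crucially, the decomposition $\Proj{A\prec B}{n_A+n_B}=\mathcal{I}^A\otimes\Proj{B}{n_B}-\left(\mathcal{I}^A-\Proj{A}{n_A}\right)\otimes\mathcal{D}^B$, proved by induction on $n_B$ from \eqref{eq:ProjComb}; multiplying it against its $B\prec A$ counterpart and using that $\mathcal{D}$ is the absorbing element collapses the product to $\Proj{A}{n_A}\otimes\Proj{B}{n_B}$ in one line (Lemma \ref{lem:tensorcombs}), with the multipartite case following by grouping parties. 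Without this identity, or an equivalent explicit handle on the product of two oppositely ordered comb projectors, neither of your sketches establishes the hard direction.
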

A proof of this theorem is given in Appendix \ref{app:proof:V=UC}.  It shows that any MPM, and as a special case any PM, is a linear combination of deterministic quantum combs that respect the local ordering of each party. The trace condition \eqref{eq:norm} further constrains the coefficients in the combination to sum up to one, hence this is an affine combination.
\section{(Non)causal correlations and causal (non)separability for MPMs\label{sec:Res_CorrforMPM}}%
Now that the set of valid MPMs has been characterised, a natural ensuing question is which ones describe causal correlations between the nodes. On a purely theory-independent account, let the set of nodes $\Party{A}{1}{},\ldots,\Party{A}{n_A}{},\Party{B}{1}{},\ldots ,\Party{B}{n_B}{},\Party{C}{1}{},\ldots$ receive settings $\Vec{s}\equiv \left(s_{\Party{A}{1}{}},s_{\Party{A}{2}{}},\ldots \right)$ and produce outcomes $\Vec{o}\equiv \left(o_{\Party{A}{1}{}},o_{\Party{A}{2}{}},\ldots \right)$ after performing their operations. We can formulate a notion of the correlations $p(\Vec{o}|\Vec{s})$ being \textit{causal}, building upon the notion of causal correlations introduced for the original process framework in Ref. \cite{OG2016}. 

The idea of that definition is that the correlations are deemed casual if they are compatible with a stochastic unravelling of the events at the nodes in a causal sequence, which can be described as follows: first, according to some probability distribution, one of all nodes is selected to come first; then, depending on the setting chosen at that node, a specific outcome occurs at it with a specific probability; after that, one of the remaining nodes is selected to come second according to a probability distribution that may generally depend on the setting and outcome that have occurred at the first node; this continues until all nodes have been used, with the probability for each subsequent event generally depending on all events that have occurred in the past. In the multiround case, the same idea can be formalised as follows:
\begin{defi}[Causal multi-round correlations]
     We call the conditional probability distribution $p(\Vec{o}|\Vec{s})$ causal if and only if it can be decomposed as%
     \begin{equation}\label{eq:CnS:causal_corr}
        \begin{split}
        &p(\Vec{o}|\Vec{s}) \\
        &=\sum_{X\in \mathfrak{N}} q_{\Party{X}{1}{}} \; p(o_{\Party{X}{1}{}}|s_{\Party{X}{1}{}}) p(\Vec{o}_{\setminus\Party{X}{1}{}}|o_{\Party{X}{1}{}},\Vec{s}) \,,
        \end{split}
    \end{equation}
    with $q_{\Party{X}{1}{}} \in [0;1]$, $\sum_{X\in \mathfrak{N}} q_{\Party{X}{1}{}} = 1$, where $p(o_{\Party{X}{1}{}}|s_{\Party{X}{1}{}})$ is a single-node distribution and $p(\Vec{o}_{\setminus \Party{X}{1}{}}|o_{\Party{X}{1}{}},\Vec{s})$ is itself a conditional causal probability distribution on the remaining nodes after the relabelling $\Party{X}{i}{}\rightarrow\Party{X}{i-1}{}$ for $i=2,\ldots,n_X$ and $n_X \rightarrow n_X-1$. We have used the shorthand notation $\Vec{o}_{\setminus \Party{X}{1}{}} \equiv \Vec{o} \setminus \left\{o_{\Party{X}{1}{}}\right\}$ for the vector of remaining outcomes. 
\end{defi}
This definition is similar to the one for the single-round case \cite{OG2016,Abbott2016}, with the difference that only unravellings compatible with the local causal orders of the nodes of the parties are permitted.  This extra condition is a nontrivial addition: the fact that the overall correlations must be compatible with no signalling from future nodes to past nodes does not guarantee that if the correlations admit a causal unravelling in principle, that unravelling would respect the local orders. Consider for example the correlations between a party $A$ having two nodes and a party $B$ having one node. Let there be no signalling from $\Party{A}{2}{}$ to $\Party{A}{1}{}$. Hence, the correlations are in principle compatible with the assumed local causal ordering. However, with this assumption alone, there can be perfect signalling from $\Party{A}{2}{}$ to $B$ and from $B$ to $\Party{A}{1}{}$, as long as $B$ does not forward any information obtained from $\Party{A}{2}{}$ to $\Party{A}{1}{}$. In such a scenario, no unravelling is compatible with the supplementary constraint that $\Party{A}{1}{}$ is before $\Party{A}{2}{}$. 

The new definition allows us to speak about \textit{causal inequalities} for the MPM, which are bounds on the set of joint probability distributions whose violation by a specific distribution implies that the distribution is not causal \cite{OCB2012}. Investigating the subject of causal inequalities for MPMs is left for future work.

In the process formalism, a given PM is \textit{causally separable} if and only if it produces causal correlations in which the conditional probability distributions appearing in the causal unravelling can themselves be seen as arising from a quantum process \cite{OG2016} (see precise definition later in the case of the MPM). 
Note that while causal separability implies causal correlations, the converse is not true. In accordance with this definition, we want to define causal separability for MPMs as the ability to unravel the causal ordering of the operations. This unravelling should be compatible with the partial ordering existing between the nodes associated to each party, and it should also take into account the effects of the new means of communication provided by the side channels between those nodes. As it turns out, this requirement motivates non-trivial modifications to the current definition for PMs \cite{OG2016,Wechs2018}. Noticing that every generalised instrument performed by a given party can be thought of as consisting of independent single-node operations connected via side channels (see below), a concise way of taking this into account is to apply the PM definition of causal separability on an MPM extended by identity side channels.
\begin{defi}\label{def:MPM_CnS}
    An MPM $W$ is causally separable if and only if the operator obtained by extending the nodes of each party with identity side channels between consecutive nodes is causally separable in the sense of a PM. In equation, let there be an extension of all the parties' subsystems in an MPM $W$ such that $ \bigotimes_{i=0}^{2n_X-1} \Hilb{X_i} \rightarrow  \bigotimes_{i=0}^{2n_X-1} \Hilb{X_i} \otimes \bigotimes_{j=1}^{2n_X-2} \Hilb{X'_{j}}$, where $d_{X'_{2k-1}} = d_{X'_{2k}}$, $k= 1 ,\ldots, n_X-1$, so that a node $\Party{X}{i}{}$, $1<i<n_X-1$, is now defined on $\Hilb{X_{2i-2}} \otimes \Hilb{X_{2i-1}} \otimes \Hilb{X'_{2i-2}} \otimes \Hilb{X'_{2i-1}}$, $\Party{X}{1}{}$ is defined on $\Hilb{X_0} \otimes \Hilb{X_1}\otimes \Hilb{X_1'}$, and $\Party{X}{n_X-1}{}$ is defined on $\Hilb{X_{2n_X-2}} \otimes \Hilb{X_{2n_X-1}}\otimes \Hilb{X'_{2n_X-2}}$. We define a PM $W'$ on the extended nodes as
    \begin{equation}\label{eq:MPMintoPM}
        W' = W \otimes \left( \bigotimes_{X\in\mathfrak{N}}\bigotimes_{i=1}^{n_X-1} Id^{X'_{2i-1}X'_{2i}} \right)\,,
    \end{equation}
    where $Id^{X'_{2i-1}X'_{2i}}$ is the transpose of the CJ operator of an identity side channel from $\left(\Hilb{X'_{2i-1}}\right)^*$ to $\Hilb{X'_{2i}}$. Then, $W$ is a causally separable MPM if and only if $W'$ is a causally separable PM for all dimensions of the identity side channels.
\end{defi}
For completeness, we now show explicitly that the correlations obtained through the `Born rule' \eqref{eq:MPMBorn} for MPMs can be transformed into correlations obtained with a PM extended by identity side channels. We will make use of a version of the \textit{link product} \cite{Chiribella2009}, which allows to merge combs together. In the context of our basis-independent convention for the CJ isomorphism, we define the link product between two operators $M^{AB^*}$ and $N^{BC^*}$, where $A$ and $C^*$ are separate systems or the trivial system, and $B$ and $B^*$ are mutually dual (\textit{i.e.} $\mathcal{H}^{B^*}$ and $\mathcal{H}^{B}$ are Hilbert spaces dual to each other, assuming the canonical isomorphism $(\mathcal{H}^*)^* \cong \mathcal{H}$ in the finite-dimensional case), as%
\begin{equation}\label{linkproduct}%
    \begin{gathered}%
         L^{AC^*} = M^{AB^*} \ast N^{B C^*}    : = \\
        \TrX{B B^*}{\left(M^{AB^*} \otimes N^{B C^*}\right) \left(\mathds{1}^{A} \otimes Id^{BB^*} \otimes \mathds{1}^{C^*} \right)},
    \end{gathered}%
\end{equation}
where $Id^{BB^*} \equiv \sum_{ij}^{d_B}  \dyad{ii}{jj}^{BB^*}$. 

\begin{figure}
\centering
\includegraphics[width=.9\linewidth]{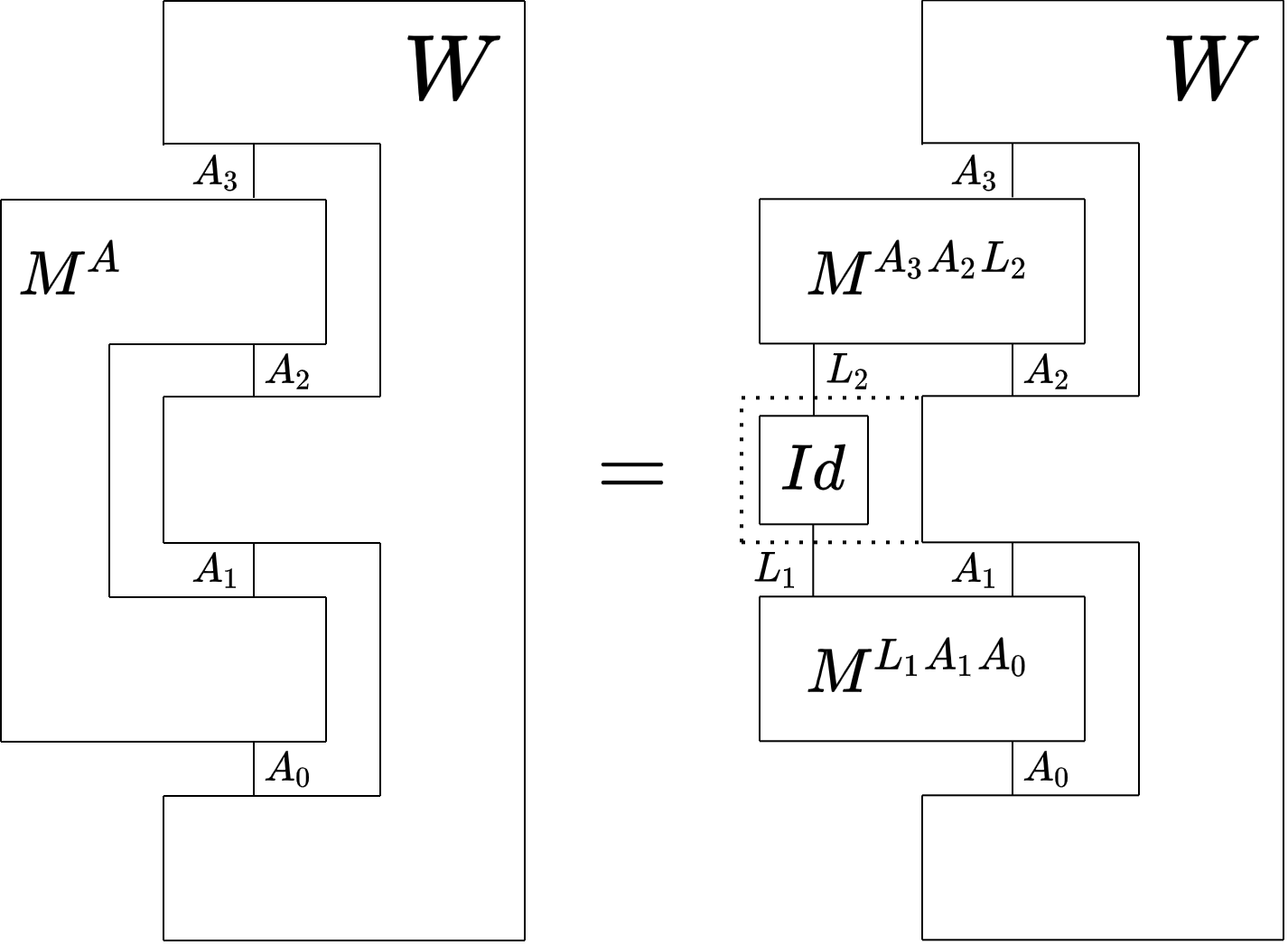}
\caption{Graphical interpretation of Eq.\eqref{eq:app:split}: an MPM (left) taking in a 2-comb is interpreted as a PM (right) extended by a side-channel (dashed part) taking in two 1-combs.}
\label{fig:MPM2PM}
\end{figure}
Consider a quantum 2-comb $M^A\equiv M^{A_0A_1A_2A_3}$. The CP map it describes can be seen as resulting from the composition of two CP maps with an intermediate identity side channel. In the CJ representation, this is given by the link product between the corresponding 1-combs,
\begin{equation}
    M^A = M^{A_3A_2L_2} \ast Id^{L_2^*L_1^*} \ast M^{L_1A_1A_0} \quad , \label{eq:app:M^A}
\end{equation}
where $L_1^*$ is the ancillary output system of the first CP map (with CJ operator $M^{L_1A_1A_0}$), $L_2$ is the ancillary input system of the second CP map (with CJ operator $M^{A_3A_2L_2}$), and $Id^{L_2^*L_1^*}$ is the CJ operator of the identity side channel connecting these systems. This is explicitly given by
\begin{multline}%
         M^A = \\ 
         \text{Tr}_{L_2L_2^*L_1L_1^*}\Big[ \left(M^{A_3A_2L_2} \otimes Id^{L_2^*L_1^*} \otimes  M^{L_1 A_1A_0}\right)  \\
         \times \left(\mathds{1}^{A_3A_2} \otimes  Id^{L_2L_2^*} \otimes Id^{L_1^*L_1} \otimes \mathds{1}^{A_1A_0}\right)\Big]\quad.
\end{multline}
Computing the trace over $L_2^*L_1^*$ yields
\begin{multline}\label{eq:app:LinkalaOO}
    M^A = \\
    \TrX{L_2 L_1}{\left(M^{A_3A_2L_2} \otimes M^{L_1 A_1A_0}\right) \left(\mathds{1}^{A} \otimes Id^{L_2L_1}\right)}.
\end{multline}
In this last expression, $Id^{L_2L_1} \equiv \sum_{ij}^{d_L^2} \dyad{ii}{jj}^{L_2L_1}$ is the \textit{transpose} of the CJ operator of an identity channel from $\Hilb{L_1^*}$ to $\Hilb{L_2}$ (just like a PM is the transpose of the CJ operator of a channel from the outputs of nodes to the inputs of nodes).  

Plugging equation \eqref{eq:app:LinkalaOO} into the `Born rule' between the 2-comb $M^A$ and a 2-node MPM $W$, one has 
\begin{multline}\label{eq:app:split}
    \TrX{}{M^A W} =\\
     \TrX{}{\left( M^{A_3A_2L_2} \otimes M^{L_1A_1A_0} \right) \left( W \otimes Id^{L_2L_1} \right)} ,
\end{multline}
where $ M^{A_3A_2L_2}, M^{L_1A_1A_0} $ are quantum 1-combs, and $ W \otimes Id^{L_2L_1} $ can be shown to be a valid (M)PM, hinting the formula \eqref{eq:MPMintoPM} for the general case. Indeed, formula \eqref{eq:app:M^A} can be applied repetitively in the case of a party with more than 2 nodes: $M^A = M^{{A}_{2n_A-2}{A}_{2n_A-1}{L}_{2n_A-2}} \ast\ldots    \ast M^{L_3A_3A_2L_2} \ast Id^{L_2^*L_1^*}\ast M^{{L}_{1}{A}_{1}{A}_{0}}$, and then the identity $\TrX{Y}{A^Y \cdot \TrX{X}{B^{XY}}} = \TrX{XY}{\left(\mathds{1}^X\otimes A^Y \right) B^{XY}}$ can be used to rewrite the whole expression as one overall partial trace. The tensor product being a special case of link product over the trivial system \cite{Chiribella2009}, this argument also applies to the multipartite case.

Consequently, the correlations achievable with an MPM are equivalent to those in a PM extended by side channels and we can appeal to the concept of causal separability for PMs in order to define causal separability for MPMs as it is done in Definition \ref{def:MPM_CnS}.

As shown in Ref. \cite{OG2016}, for the PM definition of causal separability, the possibility of extending the local operations of the parties to act on local ancillary input systems that may be entangled with the ancillary input systems of other parties also needs to be considered. Neglecting to do so may mistakenly lead to the conclusion that a PM is causally separable, although it is able to violate a causal inequality when using such an extension. Likewise, we will show in the next section that neglecting the side channels leads to the same kind of problem. We end this section by reformulating Definition \ref{def:MPM_CnS} into one closer to Ref. \cite{OG2016,Wechs2018}, so that the consequences of local causal ordering are made explicit and extension by arbitrary ancillas and side channels is explicit.

We only need to rewrite \eqref{eq:app:split} as
\begin{equation}
    \TrX{}{M^A W} = \TrX{A_3A_2L_2}{M^{A_3A_2L_2} W_{|M^{L_1A_1A_0}}}\,,
\end{equation}
where we have defined the \textit{conditional process matrix} 
%
%
\begin{multline}
    W_{|M^{L_1A_1A_0}} = \\
    c \;\TrX{L_1A_1A_0}{\left(  \mathds{1} 
     \otimes M^{L_1A_1A_0} \right) (W\otimes Id^{L_2L_1})} \,.
\end{multline}
This can be shown to be a valid process matrix on the remaining node between $A_2L_2$ and $A_3$ for all the possible choices of $M^{L_1A_1A_0}$, as long as the normalisation factor is chosen such that Eq. \eqref{eq:norm} is satisfied. On the contrary, the analogously defined $W_{|M^{A_3A_2L_2}}$ will not always be a valid PM for all $M^{A_3A_2L_2}$ as it can lead to post-selection. This implies that when we reformulate the `Born rule' so as to make the unravelling from an $n$-node MPM to an $(n-1)$-node MPM apparent, starting the unravelling with a node that is not the first of some party is automatically forbidden. The generalisation of this unravelling procedure yields the following definition:
\begin{defi}[Causal separability of the MPM]\label{def:MPM_CS}
    Consider an MPM $W$ shared by $|\mathfrak{N}|$ parties. For $|\mathfrak{N}|=1$ this MPM is a deterministic quantum comb and thus causally separable. For $|\mathfrak{N}|>1$, the MPM is causally separable if and only if, for any state $\rho \in \LinOp{\bigotimes_{X\in\mathfrak{N}}\bigotimes_{i=0}^{n_X-1} \Hilb{X_{2i}'}}$ defined on an extension of the parties' input subsystems $\Hilb{X_{2i}} \rightarrow \Hilb{\tilde{X}_{2i}} = \Hilb{X_{2i}} \otimes \Hilb{X_{2i}'}$, the extended MPM can be decomposed as
    \begin{equation}
        W \otimes \rho = \sum_{X\in\mathfrak{N}} q_{\Party{X}{1}{}} W^{\rho}_{\Party{X}{1}{}}\,,
    \end{equation}
    with $q_{\Party{X}{1}{}}\geq0$, $\sum_{X\in\mathfrak{N}} q_{\Party{X}{1}{}} = 1$, and where $W^{\rho}_{\Party{X}{1}{}}$ is an MPM compatible with party $X$'s first operation being first in the causal unravelling (\textit{i.e.} there can be no signalling from the rest of the nodes to $\Party{X}{1}{}$ \cite{OG2016}), so that the conditional MPM after the first operation of $X$ has been carried out,
    \begin{widetext}
    \begin{equation}
        \left( W^{\rho}_{\Party{X}{1}{}}\right)_{\left|M^{X_0X'_0X_1L_1}\right.} = c \; \TrX{X_0X'_0X_1L_1}{\left( M^{X_0X'_0X_1L_1} \otimes \mathds{1} \right)\left( W^X \otimes \rho^{X'} \otimes Id^{L_2L_1} \right)} \,,
    \end{equation}
    \end{widetext}
    is itself causally separable for all possible CP maps between $\LinOp{\Hilb{X_0}\otimes \Hilb{X'_0}}$ and $\LinOp{\left(\Hilb{X_1}\right)^* \otimes \left(\Hilb{L_1}\right)^*}$ represented by the CJ operator $M^{X_0X'_0X_1L_1}$. Here $L_2$ is an extension of the input system of $\Party{X}{2}{}$ and $c$ is a normalisation constant. 
\end{defi}

In the two equivalent definitions of causal separability above, we have left the dimensions of the side channels unbounded. Yet, it is natural to ask whether a bounded dimension could suffice. It is clear that no bounded dimension could reproduce all correlations achievable with an MPM because, for example, a given party could follow a protocol where she applies a different instrument at her second node depending on the outcome of the instrument applied at her first node. By taking the number of outcomes at the first node sufficiently large, the side channel required to realise this situation could be made larger than any assumed bound. However, it may still be the case in principle that some bounded dimension for the side channels (generally dependent on the dimensions of the nodes of the MPM) is sufficient for the definition of causal separability because it implies the analogous property for all dimensions. The question of whether such a bound exists is left open for future work. 

\section{Activation of causal nonseparability by a side channel}
Since a PM is a special case of an MPM with no partial ordering at all assumed for its nodes, one may think that applying the PM definition to an MPM would be sufficient to establish causal (non)separability. Remarkably, this is not true, as we will now demonstrate by an example. 

Consider an MPM for two parties, $A$ and $B$, where $A$ has two nodes and $B$ a single node. Let the dimensions of all input and output systems of all nodes be equal to 2: $d_{X_i} = 2\quad \forall X\in\mathfrak{N} \,, 0 \leq i < 2n_X-1$. We take the MPM to have the following expression, here written in the Pauli basis: 
\begin{multline}\label{eq:Activable}
    W^{AB}=\\
    \frac{1}{8}\left(\mathds{1}+\frac{1}{\sqrt{2}}\left[\GGB{A_0}{x}\GGB{A_2}{z}\GGB{A_3}{z}\GGB{B_0}{z} + \GGB{A_0}{z}\GGB{A_2}{z}\GGB{B_1}{z}\right]\right)\,.
\end{multline}
To keep the above expression concise, the tensor product between matrices acting on separate systems has been omitted, and unit matrices on subsystems are implied. $\GGB{A_0}{z}\GGB{A_2}{z}\GGB{B_1}{z}$ is then the short form of $\GGB{A_0}{z} \otimes \mathds{1}^{A_1} \otimes \GGB{A_2}{z} \otimes \mathds{1}^{A_3} \otimes \mathds{1}^{B_0} \otimes \GGB{B_1}{z}$ in our notation. 

One can demonstrate that \eqref{eq:Activable} is a valid PM as well as a valid MPM by using Theorem \ref{theo:complement}, see Appendix \ref{app:exemple}. This is actually almost the same matrix as the one given in the \textit{activation} example in Ref. \cite{OG2016}, but the isolated node $\Party{A}{1}{}$ between subsystems $A_0$ and $A_1$ is now getting a non-trivial input instead of output. The matrix \eqref{eq:Activable} actually corresponds to a deterministic quantum comb with the fixed causal order $\Party{A}{2}{}\prec B \prec \Party{A}{1}{}$ as one can verify using Eqs. \eqref{eq:QuComb}. This proves that $W^{AB}$ is a causally separable PM, which admits a physical realisation \cite{Chiribella2009}. However, in this causal realisation node $\Party{A}{2}{}$ is before node $\Party{A}{1}{}$.
\begin{figure}
    \centering
    \includegraphics[height=0.9\linewidth]{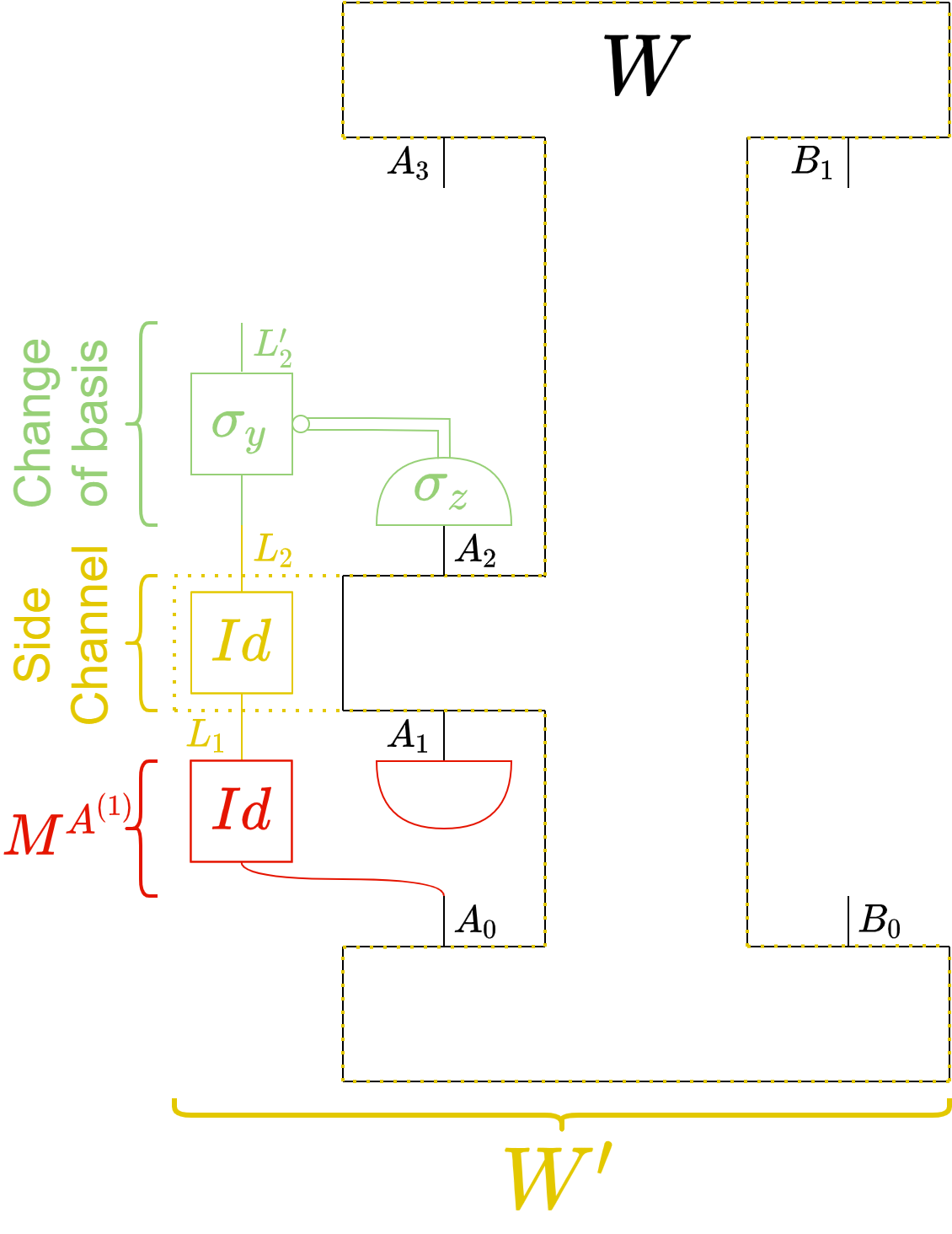}
    \caption{Activation of causal nonseparability by a side channel: process matrix $W$ is causally separable in the PM sense. When extended by a side channel (in yellow), there exists an operation $M^{A^{(1)}}$ (in red) that makes it lose this property between the 2 remaining nodes. By measuring $A_2$ and changing the basis of $L_2$ accordingly (in green), $A$ obtains the OCB PM between the remaining nodes. }
    \label{fig:Activation}
\end{figure}

Yet, if we treat $W$ as an MPM with node $\Party{A}{1}{}$ before node $\Party{A}{2}{}$, Alice can use a side channel to pass on her first input to act on it during her second operation. The extension of $W$ by an identity side channel $Id = \sum_{i,j} \dyad{ii}{jj}^{L_1L_2}$ between the operations of $A$,
\begin{equation}
    W'^{ALB} \equiv W^{AB} \otimes \sum_{i,j} \dyad{ii}{jj}^{L_1L_2}\,,
\end{equation}
is effectively allowed in the MPM formalism; the situation is represented in yellow in Fig. \ref{fig:Activation}. When Alice's first action (in red) is to simply forward her input through the channel, $M^{A^{(1)}}=\sum_{k,l} \dyad{kk}{ll}^{A_0L_1}\otimes \frac{1}{2}\mathds{1}^{A_1}$, the conditional 2-partite (M)PM on the remaining nodes is
\begin{equation}\label{eq:Activated}
    \begin{split}
        &W^{A_2A_3L_2B_0B_1}= \TrX{A_0A_1L_1}{M^{A^{(1)}} \cdot W'^{ALB}} \\
        &=\frac{1}{8}\left(\mathds{1}+\frac{1}{\sqrt{2}}\left[\GGB{A_2}{z}\GGB{A_3}{z}\GGB{L_2}{x}\GGB{B_0}{z} + \GGB{A_2}{z}\GGB{L_2}{z}\GGB{B_1}{z}\right]\right) \,.
    \end{split}
\end{equation}
If $A$ performs a measurement in the $\GGB{}{z}$ basis on $A_2$ and obtains the outcome \enquote{+}, $A$ and $B$ are left with the OCB process matrix $W^{A_3L_2B_0B_1} = \frac{1}{4}\left(\mathds{1}+\frac{1}{\sqrt{2}}\left[\GGB{A_3}{z}\GGB{L_2}{x}\GGB{B_0}{z} + \GGB{L_2}{z}\GGB{B_1}{z}\right]\right)$ on the remaining systems, which can be used to violate a causal inequality as described in Ref. \cite{OCB2012}. If she obtains the outcome \enquote{$-$}, $A$ and $B$ are left with the matrix $W^{A_3L_2B_0B_1} = \frac{1}{4}\left(\mathds{1} - \frac{1}{\sqrt{2}}\left[\GGB{A_3}{z}\GGB{L_2}{x}\GGB{B_0}{z} + \GGB{L_2}{z}\GGB{B_1}{z}\right]\right)$, which differs from the OCB matrix merely by a change of basis. Hence the same noncausal correlations as with the OCB PM can be obtained if $A$'s operations are modified to account for this change of basis. She can achieve this by applying a $\sigma_y$ transformation on $L_2$ controlled by the output of the $\sigma_x$ measurement of $A_2$ (in green). Consequently, the ability for the parties to use side communication with the MPM exhibits non-trivial differences in terms of achievable correlations compared to the PM.
\section{Conclusion}\label{sec:disc}
To summarise, we defined an extension of the PM formalism that allows multiple rounds of information exchange for each party, which we named the \enquote{multi-round process matrix} (MPM). We provided a complete characterisation of the set of valid MPMs in Theorem \ref{theo:complement} using the projective formulation of the validity constraints on deterministic quantum combs. This highlighted a connection between the set of MPMs and deterministic quantum combs: Theorem \ref{theo:PV=UPTildeC} demonstrates that MPMs are affine combination of all the quantum combs compatible with the local ordering of each party. Finally, we motivated a new notion of causal separability specific to MPMs, Definition \ref{def:MPM_CnS}, that takes into account the possibility of using side channels. A non-trivial consequence of this possibility was that the notion of causal nonseparability for PMs and MPMs are not equivalent when applied to the same operator -- we showed an example of an operator that is causally separable when considered as a PM, and causally nonseparable when considered as an MPM.

Several paths can be considered for the continuation of this work. First, the question of finding whether there exists a bound on the dimensions of the side channels needed to certify causal non-separability in the MPM sense has been left open in this work. 

Then, a natural path is to analyse the connection of this work with the axiomatic theory of higher-order quantum computation by Bisio and Perinotti \cite{Perinotti2016,Bisio2018}, which captures the MPM as a specific type of process within the infinite hierarchy of higher-order processes; this will be explored in an upcoming article. Another related work is Jia's correlational approach to quantum theory \cite{jia2020}. Several results derived here were partially found or hinted in these works.

Finding new protocols that provide an advantage over regular quantum communications \cite{Colnaghi2012,Feix2015} constitutes another research direction. We expect that the tool provided by the multi-round process matrix would be useful to this purpose, especially to formulate communication protocols using indefinite causal order as a resource \cite{Taddei_2019,kristjnsson2019resource}. 

Finally, one may expect that the projective methods used here to characterise MPMs could prove useful in the context of various other problems in the field of (M)PMs, as well as in the broader field of higher-order processes, such as for deriving convenient characterisations of more general processes in the hierarchy. Understanding how the notion of causal nonseparability generalises to higher orders is another question of great interest, which could potentially unveil new phenomena specific to these orders.

\begin{acknowledgments}
This work is an adaptation of T. H.'s master's thesis at the {\'E}cole polytechnique de Bruxelles (ULB) that can be found on QuIC website \cite{master}. Illustrations were drawn using \href{https://www.draw.io}{draw.io}. This publication was made possible through the support of the ID\# 61466 grant from the John Templeton Foundation, as part of the “The Quantum Information Structure of Spacetime (QISS)” Project (\href{http://www.qiss.fr/}{qiss.fr}). The opinions expressed in this publication are those of the authors and do not necessarily reflect the views of the John Templeton Foundation. 
This work was supported by the Program of Concerted Research Actions (ARC) of the Universit\'{e} Libre de Bruxelles. O. O. is a Research Associate of the Fonds de la Recherche Scientifique (F.R.S.–FNRS).
\end{acknowledgments}

\bibliographystyle{plainnat}
\bibliography{references}

\begin{thebibliography}{52}
\providecommand{\natexlab}[1]{#1}
\providecommand{\url}[1]{\texttt{#1}}
\expandafter\ifx\csname urlstyle\endcsname\relax
  \providecommand{\doi}[1]{doi: #1}\else
  \providecommand{\doi}{doi: \begingroup \urlstyle{rm}\Url}\fi

\bibitem[Abbott et~al.(2016)Abbott, Giarmatzi, Costa, and
  Branciard]{Abbott2016}
Alastair~A. Abbott, Christina Giarmatzi, Fabio Costa, and Cyril Branciard.
\newblock Multipartite causal correlations: Polytopes and inequalities.
\newblock \emph{Phys. Rev. A}, 94:\penalty0 032131, 2016.
\newblock \doi{10.1103/physreva.94.032131}.

\bibitem[Abbott et~al.(2017)Abbott, Wechs, Costa, and
  Branciard]{Abbott2017genuinely}
Alastair~A. Abbott, Julian Wechs, Fabio Costa, and Cyril Branciard.
\newblock Genuinely multipartite noncausality.
\newblock \emph{{Quantum}}, 1:\penalty0 39, 2017.
\newblock \doi{10.22331/q-2017-12-14-39}.

\bibitem[Allen et~al.(2017)Allen, Barrett, Horsman, Lee, and
  Spekkens]{Allen2017}
John-Mark~A. Allen, Jonathan Barrett, Dominic~C. Horsman, Ciar\'an~M. Lee, and
  Robert~W. Spekkens.
\newblock Quantum common causes and quantum causal models.
\newblock \emph{Phys. Rev. X}, 7:\penalty0 031021, 2017.
\newblock \doi{10.1103/PhysRevX.7.031021}.

\bibitem[Araújo et~al.(2017)Araújo, Guérin, and Baumeler]{Araujo_2017}
Mateus Araújo, Philippe~Allard Guérin, and {\"A}min Baumeler.
\newblock Quantum computation with indefinite causal structures.
\newblock \emph{Phys. Rev. A}, 96:\penalty0 052315, 2017.
\newblock \doi{10.1103/physreva.96.052315}.

\bibitem[Ara\'{u}jo et~al.(2014)Ara\'{u}jo, Costa, and Brukner]{Araujo2014}
Mateus Ara\'{u}jo, Fabio Costa, and {\v{C}}aslav Brukner.
\newblock Computational advantage from quantum-controlled ordering of gates.
\newblock \emph{Phys. Rev. Lett.}, 113:\penalty0 250402, 2014.
\newblock \doi{10.1103/PhysRevLett.113.250402}.

\bibitem[Ara{\'{u}}jo et~al.(2015)Ara{\'{u}}jo, Branciard, Costa, Feix,
  Giarmatzi, and Brukner]{Witness}
Mateus Ara{\'{u}}jo, Cyril Branciard, Fabio Costa, Adrien Feix, Christina
  Giarmatzi, and {\v{C}}aslav Brukner.
\newblock {Witnessing causal nonseparability}.
\newblock \emph{New Journal of Physics}, 17\penalty0 (10):\penalty0 1--28,
  2015.
\newblock \doi{10.1088/1367-2630/17/10/102001}.

\bibitem[Barrett et~al.(2019)Barrett, Lorenz, and Oreshkov]{barrett2019quantum}
Jonathan Barrett, Robin Lorenz, and Ognyan Oreshkov.
\newblock Quantum causal models.
\newblock \href{https://arxiv.org/abs/1906.10726}{arXiv:quant-ph/1906.10726},
  2019.

\bibitem[Barrett et~al.(2020)Barrett, Lorenz, and Oreshkov]{barrett2020cyclic}
Jonathan Barrett, Robin Lorenz, and Ognyan Oreshkov.
\newblock Cyclic quantum causal models.
\newblock \href{https://arxiv.org/abs/2002.12157}{arXiv:quant-ph/2002.12157},
  2020.

\bibitem[Baumeler and Wolf(2016)]{Baumeler_2016}
{\"A}min Baumeler and Stefan Wolf.
\newblock The space of logically consistent classical processes without causal
  order.
\newblock \emph{New Journal of Physics}, 18\penalty0 (1):\penalty0 013036,
  2016.
\newblock \doi{10.1088/1367-2630/18/1/013036}.

\bibitem[Baumeler et~al.(2019)Baumeler, Costa, Ralph, Wolf, and
  Zych]{Baumeler_2019}
{\"A}min Baumeler, Fabio Costa, Timothy~C. Ralph, Stefan Wolf, and Magdalena
  Zych.
\newblock Reversible time travel with freedom of choice.
\newblock \emph{Classical and Quantum Gravity}, 36\penalty0 (22):\penalty0
  224002, 2019.
\newblock \doi{10.1088/1361-6382/ab4973}.

\bibitem[Bisio and Perinotti(2019)]{Bisio2018}
Alessandro Bisio and Paolo Perinotti.
\newblock {Theoretical framework for higher-order quantum theory}.
\newblock \emph{Proceedings of the Royal Society A: Mathematical, Physical and
  Engineering Sciences}, 475\penalty0 (2225):\penalty0 20180706, 2019.
\newblock \doi{10.1098/rspa.2018.0706}.

\bibitem[Brukner(2014)]{Brukner2014}
{\v{C}}aslav Brukner.
\newblock {Quantum causality}.
\newblock \emph{Nature Physics}, 10\penalty0 (4):\penalty0 259--263, 2014.
\newblock \doi{10.1038/nphys2930}.

\bibitem[Castro-Ruiz et~al.(2018)Castro-Ruiz, Giacomini, and Brukner]{Dynamics}
Esteban Castro-Ruiz, Flaminia Giacomini, and {\v{C}}aslav Brukner.
\newblock {Dynamics of Quantum Causal Structures}.
\newblock \emph{Phys. Rev. X}, 8:\penalty0 011047, 2018.
\newblock \doi{10.1103/PhysRevX.8.011047}.

\bibitem[Chiribella(2011)]{Chiribella2012}
Giulio Chiribella.
\newblock {Perfect discrimination of no-signalling channels via quantum
  superposition of causal structures}.
\newblock \emph{Phys. Rev. A}, 86:\penalty0 040301, 2011.
\newblock \doi{10.1103/PhysRevA.86.040301}.

\bibitem[Chiribella et~al.(2009)Chiribella, D'Ariano, and
  Perinotti]{Chiribella2009}
Giulio Chiribella, Giacomo~Mauro D'Ariano, and Paolo Perinotti.
\newblock {Theoretical framework for quantum networks}.
\newblock \emph{Phys. Rev. A}, 80\penalty0 (2), 2009.
\newblock \doi{10.1103/PhysRevA.80.022339}.

\bibitem[Chiribella et~al.(2011)Chiribella, D’Ariano, and
  Perinotti]{Chiribella_2011}
Giulio Chiribella, Giacomo~Mauro D’Ariano, and Paolo Perinotti.
\newblock Informational derivation of quantum theory.
\newblock \emph{Phys. Rev. A}, 84\penalty0 (1), 2011.
\newblock \doi{10.1103/physreva.84.012311}.

\bibitem[Chiribella et~al.(2013)Chiribella, D'Ariano, Perinotti, and
  Valiron]{Chiribella2013}
Giulio Chiribella, Giacomo~Mauro D'Ariano, Paolo Perinotti, and Benoit Valiron.
\newblock {Quantum computations without definite causal structure}.
\newblock \emph{Phys. Rev. A}, 88:\penalty0 022318, 2013.
\newblock \doi{10.1103/PhysRevA.88.022318}.

\bibitem[Choi(1975)]{Choi1975}
Man-Duen Choi.
\newblock {Completely positive linear maps on complex matrices}.
\newblock \emph{Linear Algebra and its Applications}, 10\penalty0 (3):\penalty0
  285--290, 1975.
\newblock \doi{10.1016/0024-3795(75)90075-0}.

\bibitem[Colnaghi et~al.(2012)Colnaghi, D'Ariano, Facchini, and
  Perinotti]{Colnaghi2012}
Timoteo Colnaghi, Giacomo~Mauro D'Ariano, Stefano Facchini, and Paolo
  Perinotti.
\newblock {Quantum computation with programmable connections between gates}.
\newblock \emph{Physics Letters, Section A: General, Atomic and Solid State
  Physics}, 376\penalty0 (45):\penalty0 2940--2943, 2012.
\newblock \doi{10.1016/j.physleta.2012.08.028}.

\bibitem[Feix et~al.(2015)Feix, Ara{\'{u}}jo, and Brukner]{Feix2015}
Adrien Feix, Mateus Ara{\'{u}}jo, and {\v{C}}aslav Brukner.
\newblock {Quantum superposition of the order of parties as a communication
  resource}.
\newblock \emph{Phys. Rev. A}, 92:\penalty0 052326, 2015.
\newblock \doi{10.1103/PhysRevA.92.052326}.

\bibitem[Feix et~al.(2016)Feix, Ara{\'{u}}jo, and Brukner]{Feix_2016}
Adrien Feix, Mateus Ara{\'{u}}jo, and {\v{C}}aslav Brukner.
\newblock Causally nonseparable processes admitting a causal model.
\newblock \emph{New Journal of Physics}, 18\penalty0 (8):\penalty0 083040,
  2016.
\newblock \doi{10.1088/1367-2630/18/8/083040}.

\bibitem[Goswami et~al.(2018)Goswami, Giarmatzi, Kewming, Giarmatzi, Branciard,
  Romero, and White]{Goswami2018}
Kaumudibikash Goswami, Christina Giarmatzi, Michael Kewming, Fabio Giarmatzi,
  Cyril Branciard, Jacqueline Romero, and Andrew~G. White.
\newblock {Indefinite Causal Order in a Quantum Switch}.
\newblock \emph{Phys. Rev. Lett.}, 121:\penalty0 090503, 2018.
\newblock \doi{10.1103/PhysRevLett.121.090503}.

\bibitem[Gu{\'{e}}rin et~al.(2016)Gu{\'{e}}rin, Feix, Ara{\'{u}}jo, and
  Brukner]{Guerin2016}
Philippe~Allard Gu{\'{e}}rin, Adrien Feix, Mateus Ara{\'{u}}jo, and
  {\v{C}}aslav Brukner.
\newblock {Exponential Communication Complexity Advantage from Quantum
  Superposition of the Direction of Communication}.
\newblock \emph{Phys. Rev. Lett.}, 117:\penalty0 100502, 2016.
\newblock \doi{10.1103/PhysRevLett.117.100502}.

\bibitem[Guérin and Brukner(2018)]{Guerin_2018}
Philippe~Allard Guérin and {\v{C}}aslav Brukner.
\newblock Observer-dependent locality of quantum events.
\newblock \emph{New Journal of Physics}, 20\penalty0 (10):\penalty0 103031,
  2018.
\newblock \doi{10.1088/1367-2630/aae742}.

\bibitem[Hardy(2007)]{Hardy2007}
Lucien Hardy.
\newblock {Towards quantum gravity: a framework for probabilistic theories with
  non-fixed causal structure}.
\newblock \emph{Journal of Physics A: Mathematical and Theoretical},
  40\penalty0 (12):\penalty0 3081--3099, 2007.
\newblock \doi{10.1088/1751-8113/40/12/S12}.

\bibitem[Herstein(1975)]{herstein1975topics}
Israel~N. Herstein.
\newblock \emph{Topics in algebra}.
\newblock Open University set book. Wiley, 1975.
\newblock ISBN 978-0-471-00258-1.
\newblock URL \url{https://books.google.be/books?id=LJRxPQAACAAJ}.

\bibitem[Hiai and Petz(2014)]{hiai2014}
Fumio Hiai and D{\'e}nes Petz.
\newblock \emph{Introduction to matrix analysis and applications}.
\newblock Springer Science \& Business Media, 2014.
\newblock ISBN 978-3-319-04149-0.
\newblock \doi{10.1007/978-3-319-04150-6}.

\bibitem[Hoffreumon(2019)]{master}
Timoth{\'e}e Hoffreumon.
\newblock Processes with indefinite causal structure in quantum theory: the
  multi-round process matrix.
\newblock Master's thesis, Universit{\'e} Libre de Bruxelles, 2019.
\newblock
  \url{http://quic.ulb.ac.be/_media/publications/2019_mfe_timothee_hoffreumon.pdf}.

\bibitem[Jamio{\l}kowski(1972)]{Jamiolkowski1972}
Andrzej Jamio{\l}kowski.
\newblock {Linear transformations which preserve trace and positive
  semidefiniteness of operators}.
\newblock \emph{Reports on Mathematical Physics}, 3\penalty0 (4):\penalty0
  275--278, 1972.
\newblock \doi{10.1016/0034-4877(72)90011-0}.

\bibitem[Jia(2020)]{jia2020}
Ding Jia.
\newblock Correlational quantum theory and correlation constraints.
\newblock \href{https://arxiv.org/abs/2001.03142}{arXiv:quant-ph/2001.03142},
  2020.

\bibitem[Kissinger and Uijlen(2017)]{Kissinger2017}
Aleks Kissinger and Sander Uijlen.
\newblock A categorical semantics for causal structure.
\newblock In \emph{2017 32nd Annual ACM/IEEE Symposium on Logic in Computer
  Science (LICS)}, pages 1--12, 2017.
\newblock \doi{10.1109/LICS.2017.8005095}.

\bibitem[Kristj\'{a}nsson et~al.(2020)Kristj\'{a}nsson, Chiribella, Salek,
  Ebler, and Wilson]{kristjnsson2019resource}
Hl\'{e}r Kristj\'{a}nsson, Giulio Chiribella, Sina Salek, Daniel Ebler, and
  Matthew Wilson.
\newblock Resource theories of communication.
\newblock \emph{New Journal of Physics}, 22\penalty0 (7):\penalty0 073014,
  2020.
\newblock \doi{10.1088/1367-2630/ab8ef7}.

\bibitem[Maclean et~al.(2017)Maclean, Ried, Spekkens, and Resch]{Maclean2017}
Jean-Philippe~W. Maclean, Katja Ried, Robert~W. Spekkens, and Kevin~J. Resch.
\newblock {Quantum-coherent mixtures of causal relations}.
\newblock \emph{Nature Communications}, 8:\penalty0 1--10, 2017.
\newblock \doi{10.1038/ncomms15149}.

\bibitem[Nielsen and Chuang(2009)]{Nielsen2009}
Michael~A. Nielsen and Isaac~L. Chuang.
\newblock \emph{Quantum Computation and Quantum Information}.
\newblock Cambridge University Press, 2009.
\newblock \doi{10.1017/cbo9780511976667}.

\bibitem[Oreshkov(2019)]{Oreshkov2018}
Ognyan Oreshkov.
\newblock Time-delocalized quantum subsystems and operations: on the existence
  of processes with indefinite causal structure in quantum mechanics.
\newblock \emph{Quantum}, 3:\penalty0 206, 2019.
\newblock \doi{10.22331/q-2019-12-02-206}.

\bibitem[Oreshkov and Cerf(2016)]{OreshkovCerf2016}
Ognyan Oreshkov and Nicolas~J. Cerf.
\newblock Operational quantum theory without predefined time.
\newblock \emph{New Journal of Physics}, 18\penalty0 (7):\penalty0 073037,
  2016.
\newblock \doi{10.1088/1367-2630/18/7/073037}.

\bibitem[Oreshkov and Giarmatzi(2016)]{OG2016}
Ognyan Oreshkov and Christina Giarmatzi.
\newblock {Causal and causally separable processes}.
\newblock \emph{New Journal of Physics}, 18\penalty0 (9):\penalty0 1--36, 2016.
\newblock \doi{10.1088/1367-2630/18/9/093020}.

\bibitem[Oreshkov et~al.(2012)Oreshkov, Costa, and Brukner]{OCB2012}
Ognyan Oreshkov, Fabio Costa, and {\v{C}}aslav Brukner.
\newblock {Quantum correlations with no causal order}.
\newblock \emph{Nature Communications}, 3:\penalty0 1--13, 2012.
\newblock \doi{10.1038/ncomms2076}.

\bibitem[Perinotti(2017)]{Perinotti2016}
Paolo Perinotti.
\newblock Causal structures and the classification of higher order quantum
  computations.
\newblock \emph{Tutorials, Schools, and Workshops in the Mathematical
  Sciences}, page 103–127, 2017.
\newblock \doi{10.1007/978-3-319-68655-4_7}.

\bibitem[Petz(2007)]{Petz2007}
D{\'{e}}nes Petz.
\newblock {Complementarity in quantum systems}.
\newblock \emph{Reports on Mathematical Physics}, 59\penalty0 (2):\penalty0
  209--224, 2007.
\newblock \doi{10.1016/S0034-4877(07)00010-9}.

\bibitem[Piziak et~al.(1999)Piziak, Odell, and Hahn]{Piziak1999}
Robert Piziak, Patrick~L. Odell, and R.~Hahn.
\newblock {Constructing projections on sums and intersections}.
\newblock \emph{Computers and Mathematics with Applications}, 37\penalty0
  (1):\penalty0 67--74, 1999.
\newblock \doi{10.1016/S0898-1221(98)00242-9}.

\bibitem[Portmann et~al.(2017)Portmann, Matt, Maurer, Renner, and
  Tackmann]{Portmann2015}
Christopher Portmann, Christian Matt, Ueli Maurer, Renato Renner, and Bjorn
  Tackmann.
\newblock Causal boxes: Quantum information-processing systems closed under
  composition.
\newblock \emph{IEEE Transactions on Information Theory}, page 1–1, 2017.
\newblock \doi{10.1109/tit.2017.2676805}.

\bibitem[Procopio et~al.(2015)Procopio, Moqanaki, Ara{\'{u}}jo, Costa,
  Calafell, Dowd, Hamel, Rozema, Brukner, and Walther]{Procopio2015}
Lorenzo~M. Procopio, Amir Moqanaki, Mateus Ara{\'{u}}jo, Fabio Costa,
  Irati~Alonso Calafell, Emma~G. Dowd, Deny~R. Hamel, Lee~A. Rozema,
  {\v{C}}aslav Brukner, and Philip Walther.
\newblock {Experimental superposition of orders of quantum gates}.
\newblock \emph{Nature Communications}, 6:\penalty0 1--10, 2015.
\newblock \doi{10.1038/ncomms8913}.

\bibitem[Procopio et~al.(2019)Procopio, Delgado, Enr{\'{i}}quez, Belabas, and
  Levenson]{Procopio}
Lorenzo~M. Procopio, Francisco Delgado, Marco Enr{\'{i}}quez, Nadia Belabas,
  and Juan~Ariel Levenson.
\newblock {Communication Enhancement through Quantum Coherent Control of N
  Channels in an Indefinite Causal-Order Scenario}.
\newblock \emph{Entropy}, 21\penalty0 (10):\penalty0 1012, 2019.
\newblock \doi{10.3390/e21101012}.

\bibitem[Rubino et~al.(2017{\natexlab{a}})Rubino, Rozema, Feix, Ara{\'{u}}jo,
  Zeuner, Procopio, Brukner, and Walther]{Expe2017}
Giulia Rubino, Lee~A. Rozema, Adrien Feix, Mateus Ara{\'{u}}jo, Jonas~M.
  Zeuner, Lorenzo~M. Procopio, {\v{C}}aslav Brukner, and Philip Walther.
\newblock {Experimental verification of an indefinite causal order}.
\newblock \emph{Science Advances}, 2017{\natexlab{a}}.
\newblock \doi{10.1126/sciadv.1602589}.

\bibitem[Rubino et~al.(2017{\natexlab{b}})Rubino, Rozema, Massa, Araújo, Zych,
  Brukner, and Walther]{Rubino2017}
Giulia Rubino, Lee~A. Rozema, Francesco Massa, Mateus Araújo, Magdalena Zych,
  {\v C}aslav Brukner, and Philip Walther.
\newblock Experimental entanglement of temporal orders.
\newblock \href{https://arxiv.org/abs/1712.06884}{arXiv:quant-ph/1712.06884},
  2017{\natexlab{b}}.

\bibitem[Silva et~al.(2017)Silva, Guryanova, Short, Skrzypczyk, Brunner, and
  Popescu]{Silva_2017}
Ralph Silva, Yelena Guryanova, Anthony~J. Short, Paul Skrzypczyk, Nicolas
  Brunner, and Sandu Popescu.
\newblock Connecting processes with indefinite causal order and multi-time
  quantum states.
\newblock \emph{New Journal of Physics}, 19\penalty0 (10):\penalty0 103022,
  2017.
\newblock \doi{10.1088/1367-2630/aa84fe}.

\bibitem[Taddei et~al.(2019)Taddei, Nery, and Aolita]{Taddei_2019}
M\'arcio~M. Taddei, Ranieri~V. Nery, and Leandro Aolita.
\newblock Quantum superpositions of causal orders as an operational resource.
\newblock \emph{Phys. Rev. Research}, 1:\penalty0 033174, 2019.
\newblock \doi{10.1103/PhysRevResearch.1.033174}.

\bibitem[Tobar and Costa(2020)]{tobar2020reversible}
Germain Tobar and Fabio Costa.
\newblock Reversible dynamics with closed time-like curves and freedom of
  choice.
\newblock \emph{Classical and Quantum Gravity}, 37\penalty0 (20):\penalty0
  205011, 2020.
\newblock ISSN 1361-6382.
\newblock \doi{10.1088/1361-6382/aba4bc}.

\bibitem[Wechs et~al.(2019)Wechs, Abbott, and Branciard]{Wechs2018}
Julian Wechs, Alastair~A. Abbott, and Cyril Branciard.
\newblock {On the definition and characterisation of multipartite causal
  (non)separability}.
\newblock \emph{New Journal of Physics}, 21\penalty0 (1):\penalty0 013027,
  2019.
\newblock \doi{10.1088/1367-2630/aaf352}.

\bibitem[Wei et~al.(2019)Wei, Tischler, Zhao, Li, Arrazola, Liu, Zhang, Li,
  You, Wang, Chen, Sanders, Zhang, Pryde, Xu, and Pan]{Wei_2019}
Kejin Wei, Nora Tischler, Si-Ran Zhao, Yu-Huai Li, Juan~Miguel Arrazola, Yang
  Liu, Weijun Zhang, Hao Li, Lixing You, Zhen Wang, Yu-Ao Chen, Barry~C.
  Sanders, Qiang Zhang, Geoff~J. Pryde, Feihu Xu, and Jian-Wei Pan.
\newblock \emph{Phys. Rev. Lett.}, 122:\penalty0 120504, Mar 2019.
\newblock \doi{10.1103/PhysRevLett.122.120504}.

\bibitem[Zych et~al.(2019)Zych, Costa, Pikovski, and Brukner]{Zych2016}
Magdalena Zych, Fabio Costa, Igor Pikovski, and {\v{C}}aslav Brukner.
\newblock Bell’s theorem for temporal order.
\newblock \emph{Nature Communications}, 10\penalty0 (1), 2019.
\newblock ISSN 2041-1723.
\newblock \doi{10.1038/s41467-019-11579-x}.

\end{thebibliography}

\appendix

\section{Projective formulation}
\subsection{The projective superoperator \label{app:projo}}
This section aims to introduce the reader to the mapping $ \Dep{X}{} : \LinOp{\Hilb{}} \rightarrow \LinOp{\Hilb{}}$, defined as
\begin{equation}
    \DepPar{X}{\cdot} = \MapX{X}{\cdot} \,, \label{eq:Depolop}
\end{equation}
where $X$ is a tensor factor of $\Hilb{}$, and $d_X$ its dimension. It was introduced in Ref. \cite{Witness} alongside its shorthand prescript notation. To illustrate its properties, we consider an operator $M$ defined on some space $\LinOp{\Hilb{A} \otimes \Hilb{B} \otimes \Hilb{C}}$ of dimension $d^2 = (d_A d_B d_C)^2$. 

We define a multiplication within the prescript that corresponds formally to a composition of several maps as
\begin{equation}
	\MapX{A}{\MapX{B}{M}} \equiv \Dep{AB}{M} \,.
\end{equation}
This \enquote{multiplication} inherits the properties of partial tracing: it possesses an identity element (see below), it is associative as well as commutative as partial tracings on different subsystems are associative and commute, but it does not possess an inverse. Next, note that this mapping is CPTP, as well as idempotent: 
\begin{equation}\label{eq:app:idempot}
	\Dep{AA}{M} = \Dep{A^2}{M} = \Dep{A}{M} \,.
\end{equation}
    
Using the Hilbert-Schmidt product as the inner product on the space of linear operators,
\begin{gather}
	\InProd{\cdot}{\cdot} \,:\, \LinOp{\Hilb{}} \times \LinOp{\Hilb{}} \rightarrow \mathbb{C} \quad , \notag \\
	M,N \mapsto \InProd{M}{N} \equiv \TrX{}{M^\dag \cdot N}\quad,
\end{gather}
one can verify that the mapping is self-adjoint with respect to it. Let $N \in \LinOp{\Hilb{A} \otimes \Hilb{B} \otimes \Hilb{C}}$ be another arbitrary operator defined on the same space as $M$, then
\begin{align}
	\InProd{M}{\Dep{A}{N}} &=\TrX{ABC}{M^\dag \cdot \left( \MapX{A}{N} \right)} \notag \\
	&= \TrX{BC}{ \TrX{A}{M^\dag} \cdot \frac{1}{d_A}\,\TrX{A}{N}} \notag \\
	&= \TrX{ABC}{\left( \MapX{A}{M^\dag} \right) \cdot N}\, ,
\end{align}
where the identity $ \TrX{ABC}{ U^{ABC}\cdot \left(\mathds{1}^A \otimes V^{BC}\right) } = \TrX{BC}{\TrX{A}{U^{ABC}} \cdot V^{BC} }$ have been used to go to the second and third lines.
Thus,
\begin{equation}\label{eq:app:selfadj}
	\InProd{M}{ \Dep{A}{N} } = \InProd{ \Dep{A}{M} }{N}\quad.
\end{equation}
This means that the mapping is idempotent and self-adjoint, making it an orthogonal projector onto a subspace of $\LinOp{\Hilb{A} \otimes \Hilb{B} \otimes \Hilb{C}}$.

We define a linear addition operation within the prescript as 
\begin{equation}
	\Dep{A}{M} + \Dep{B}{M} \equiv \Dep{ A + B}{M} \,,
\end{equation}
and for each element $\DepPar{A}{\cdot}$ we define the inverse element $\DepPar{-A}{\cdot} \equiv - \MapX{A}{\cdot}$, such that $\Dep{A+(-A)}{M} \equiv \Dep{0}{M} = 0$, where $\Dep{0}{}$ is the additive identity element, which corresponds to the zero mapping. We will introduce the minus sign as a shorthand notation $\Dep{A+(-A)}{M} \equiv \Dep{A-A}{M}$ for addition of inverse elements. With this addition and multiplication, the prescripts have the algebraic structure of a ring \cite{herstein1975topics}. One could also promote the structure to an algebra by defining a scalar multiplication the same way the inverse additive element have been defined, \textit{i.e.} $\DepPar{\lambda A}{\cdot} \equiv \lambda\MapX{A}{\cdot}$, for all $\lambda \in \mathbb{C}$. However, note that the idempotency property is not in general conserved under scalar multiplication nor under addition of prescripts. Hence, for the purposes of this article, we restrict the algebra to a ring and we only consider combinations under addition that result in valid projectors, \textit{i.e.} idempotent elements. 

Finally, we define 2 particular elements of the ring. One is the identity mapping $\mathcal{I}$, such that $\mathcal{I}[M] = M$ for all operators, which corresponds to the multiplicative identity element in the ring. In prescript notation it is then denoted as a 1:
\begin{equation}
    \mathcal{I}[M] \equiv \Dep{1}{M}\,.
\end{equation}
The other is the multiplicative absorbing element $\mathcal{D}: \mathcal{D}\Proj{}{} = \Proj{}{}\mathcal{D} = \mathcal{D}$, for every $\Proj{}{}$ in the ring, which is no other than the projector to the subspace spanned by the unit matrix $\frac{1}{d}\mathds{1}$, or equivalently the mapping applied on the full system $\mathcal{D}(M) = \MapX{}{M}$. In our example, $\mathcal{D}$ is written in prescript notation as
\begin{equation}
    \mathcal{D}[M] \equiv \Dep{ABC}{M} \,.
\end{equation}
We will also use superscripts when those projectors are applied on parts of composite subsystems, \textit{e.g.} $\left(\mathcal{I}^A\otimes \mathcal{D}^{BC}\right)[M] \equiv \Dep{1BC}{M} = \Dep{BC}{M}$.
\subsection{Comb conditions in projective formulation \label{app:CombCd}}
In this section, we derive an alternative formulation of the set of necessary and sufficient constraints an operator has to satisfy to be a valid deterministic quantum comb (Eqs. \eqref{eq:QuComb_Chiri}, see \cite[Theorems 3 and 5]{Chiribella2009}). This formulation is in the spirit of what was done in Ref. \cite{Witness} for the PM validity conditions. We will find a recursive way of building the projector onto the subspace of valid quantum combs for an increasing number of teeth. To do so, we use the shorthand notation $M^{(j)} \equiv M^{X_0X_1\cdots X_{2j-1}}$, so that $M^X \equiv M^{(n_X)}$, to refer to a deterministic quantum comb with $j$ teeth. 

The set of conditions to be proven equivalent to \eqref{eq:QuComb_Chiri} are equations \eqref{eq:QuComb} and \eqref{eq:ProjComb}, rewritten here for convenience:%
\begin{subequations}
    \begin{gather}
        M^X\geq0 \,,\label{eq:app:QuComb_positivity}\\
        \mathcal{P}_{n_X}^{X}\left[M^X\right] = M^X \,,\label{eq:app:QuComb_subspace}\\
        \TrX{}{M^X} = \prod_{j=0}^{n_X-1} d_{X_{2j}} \,,\label{eq:app:QuComb_normalisation}
    \end{gather}%
    \label{eq:app:QuComb}%
\end{subequations}%
where $\mathcal{P}_{n_X}^{X}\equiv \mathcal{P}_{n_X}^{\Party{X}{1}{}\prec \ldots \prec \Party{X}{n_X}{}} $ is the projector to the validity subspace of $n_X$-combs, which is given by the recursive relation \eqref{eq:ProjComb}
\begin{equation}\label{eq:PC}
    \begin{split}
        &\mathcal{P}_{0} = \DepPar{1}{\cdot} \,,\\
        &\mathcal{P}_{n_X}^{\Party{X}{1}{}\prec \ldots \prec \Party{X}{n_X-1}{}\prec \Party{X}{n_X}{}} =\\
        &\,\DepPar{\left(1-X_{2n_X-1}\right)}{\cdot} + \DepPar{X_{2n_X-2}X_{2n_X-1}}{\Proj{\Party{X}{1}{}\prec \ldots \prec \Party{X}{n_X-1}{}}{n_X-1}} .
    \end{split}
\end{equation}
Keep in mind that we are using the shorthand notation $\Proj{\Party{X}{1}{}\prec \ldots \prec \Party{X}{n_X-1}{}}{n_X-1} \equiv \Proj{X_0\ldots X_{2n_X-3}}{n_X-1} \otimes \mathcal{I}^{X_{2n_X-2}X_{2n_X-1}}$ to indicate that we are projecting the subsystems $X_0\ldots X_{2n_X-3}$ on a subspace of quantum ($n_X$-1)-combs with causal ordering $\Party{X}{1}{}\prec \ldots \prec \Party{X}{n_X-1}{}$.

Note that condition \eqref{eq:app:QuComb_positivity} is just condition \eqref{eq:QuComb_Chiri_pos}---the condition that a quantum comb is PSD, which reflects the fact that it is the CJ operator of a CP map. Therefore, equivalence between the set of conditions \eqref{eq:QuComb_Chiri} and the set of conditions \eqref{eq:app:QuComb} with \eqref{eq:PC} would follow if we show that \eqref{eq:QuComb_Chiri_subspace} is equivalent to \eqref{eq:app:QuComb_subspace} and \eqref{eq:app:QuComb_normalisation}, with the projector in \eqref{eq:app:QuComb_subspace} given by \eqref{eq:PC}. This is what we show next.

The normalisation condition \eqref{eq:app:QuComb_normalisation} for deterministic combs is obtained directly by taking the total trace of the comb. We can indeed use the identity $\TrX{AB}{M^A \otimes N^B} = \TrX{A}{M^A} \TrX{B}{N^B}$, which allows to nest the set of linear constraints \eqref{eq:QuComb_Chiri_subspace} within one another. As it yields $\TrX{X_{2i-1}X_{2i-2}}{M^{(i)}} = d_{X_{2i-2}} \; M^{(i-1)}$, one can see that the total trace of a valid comb is equal to the product of its input dimensions (reflecting the fact that a deterministic quantum comb correspond to a channel from all inputs to all outputs). Hence, Eq. \eqref{eq:app:QuComb_normalisation} is a necessary condition.


We next show that for a 1-comb, Eq. \eqref{eq:app:QuComb_subspace} with the projector $\Proj{X}{1}$ defined as in \eqref{eq:PC} is necessary. Indeed, the constraint for a 1-comb is equivalent to
\begin{align}
    \TrX{X_1}{M^{(1)}}\,&=\,\mathds{1}^{X_0} & \iff \nonumber\\
    \Dep{X_1}{M^{(1)}} &= \frac{\mathds{1}^{X_0X_1}}{d_{X_0}d_{X_1}} \; d_{X_0}& \overset{\eqref{eq:app:QuComb_normalisation}}{\Longrightarrow} \nonumber\\
    \Dep{X_1}{M^{(1)}} &= \Dep{X_0X_1}{M^{(1)}} &, \label{eq:app:P^1_raw}
\end{align}
where we multiplied by $\frac{\mathds{1}^{X_1}}{d_{X_1}} \otimes$ on both sides between the first and second lines, and we used the definition \eqref{eq:Depolop} on the left-hand side, then we used Eq. \eqref{eq:app:QuComb_normalisation}, \textit{i.e.} $\TrX{}{M^{(1)}}=d_{X_0}$, together with \eqref{eq:Depolop} on the right-hand side to go to from the second line to the last. Hence, \eqref{eq:app:P^1_raw} is necessary.

The relation \eqref{eq:app:P^1_raw} is rephrased as a projector as $M^{(1)} - \Dep{X_1}{M^{(1)}} + \Dep{X_0X_1}{M^{(1)}} = M^{(1)}$ \cite{Dynamics}, or
\begin{equation}\label{eq:app:P^1}
    \Proj{X}{1}\left[M^{(1)}\right] \equiv \Dep{1-X_1+X_0X_1}{M^{(1)}} = M^{(1)} \,,
\end{equation}
where we have defined the 1-comb projector $\Proj{X}{1} \equiv \Proj{\Party{X}{1}{}}{1}$ from the space of operators $\LinOp{\Hilb{X_0}\otimes \left(\Hilb{X_1}\right)^*}$ to the subspace of (nonnormalised) deterministic quantum 1-combs. It is a projector as the relation $\left( 1-X_1+X_0X_1 \right)^2 = 1-X_1+X_0X_1$ follows from the algebraic rules of the ring defined in Sec. \ref{app:projo}. This last equation, \eqref{eq:app:P^1}, is exactly \eqref{eq:app:QuComb_subspace} with \eqref{eq:PC} in the case of 1-combs, proving their necessity in this case. 

To prove that Eqs. \eqref{eq:app:QuComb_subspace} and \eqref{eq:app:QuComb_normalisation} with \eqref{eq:PC} are sufficient to enforce Eqs. \eqref{eq:QuComb_Chiri_subspace} in the 1-comb case, we start from the projective condition,
\begin{multline}
    \Dep{X_1}{M^{(1)}} - \Dep{X_0X_1}{M^{(1)}} = 0  \\
    \iff \\
    \frac{\mathds{1}^{X_1}}{d_{X_1}} \otimes \left(\TrX{X_1}{M^{(1)}}-\TrX{X_1}{\Dep{X_0}{M^{(1)}}}\right) =0 \\
    \iff \\
    \TrX{X_1}{M^{(1)}}-\frac{\mathds{1}^{X_0}}{d_{X_0}}\TrX{X_0X_1}{{M^{(1)}}} =0  \\
    \overset{\eqref{eq:app:QuComb_normalisation}}{\Longrightarrow} \\
    \TrX{X_1}{M^{(1)}} - \mathds{1}^{X_0} = 0 \, ,
\end{multline}

where we successively used the fact that prescripts commute and the distributive property of the tensor product to go to the second line, and \eqref{eq:Depolop} to make the full trace appear in the third line. Injecting \eqref{eq:app:QuComb_normalisation} we reach \eqref{eq:QuComb_Chiri_subspace} in the last line. Therefore,  \eqref{eq:app:QuComb_subspace} and \eqref{eq:app:QuComb_normalisation} with \eqref{eq:PC} imply \eqref{eq:QuComb_Chiri_subspace} for the case of 1-combs. This completes the proof of equivalence in this case. 

Equivalence in the general case as well as a rule for building the $n_X$-comb projector can be proven by induction. Suppose that the reformulation holds up to $n_X-1$ teeth such that, for $2\leq j \leq n_X-1$,
\begin{equation}\label{eq:app:P^nX-1}
    \begin{gathered}
        \TrX{X_{2j-1}}{M^{(j)}} = \mathds{1}^{X_{2j-2}} \otimes M^{(j-1)},\\
        \TrX{X_1}{M^{(1)}} = \mathds{1}^{X_0}\,,\\
        \iff \\
        \Proj{X}{n_X-1}\left[M^{(n_X-1)}\right] = M^{(n_X-1)} \,, \\
        \TrX{}{M^{(n_X-1)}} = \prod_{j=0}^{n_X-2} d_{X_{2j}} \,,
    \end{gathered}
\end{equation}
where $\Proj{X}{n_X-1} \equiv \mathcal{P}_{n_X}^{\Party{X}{1}{}\prec \Party{X}{2}{} \prec \ldots \prec \Party{X}{n_X-1}{}}$ is the $(n_X-1)$-comb projector. To prove that the $n_X$ case of \eqref{eq:app:QuComb} follows from \eqref{eq:QuComb_Chiri}, we start with $\TrX{X_{2n_X-1}}{M^{(n_X)}} =  \mathds{1}^{X_{2n_X-2}} \otimes M^{(n_X-1)}$. We already know that the trace condition will be satisfied, so to find the projective condition we write
\begin{multline}
    \TrX{X_{2n_X-1}}{M^{(n_X)}} =  \mathds{1}^{X_{2n_X-2}} \otimes M^{(n_X-1)} \, ,\\
    \Dep{X_{2n_X-1}}{M^{(n_X)}} =  \frac{\mathds{1}^{X_{2n_X-2}X_{2n_X-1}}}{d_{X_{2n_X-1}}}\otimes M^{(n_X-1)} \, ,\\
    \Dep{X_{2n_X-1}X_{2n_X-2}}{M^{(n_X)}} = \frac{\mathds{1}^{X_{2n_X-2}X_{2n_X-1}}}{d_{X_{2n_X-1}}} \otimes M^{(n_X-1)}  \, ,\\
    \DepPar{X_{2n_X-1}X_{2n_X-2}}{\ProjOn{X}{n_X-1}{M^{(n_X)}}} = \\
     \frac{\mathds{1}^{X_{2n_X-2}X_{2n_X-1}}}{d_{X_{2n_X-1}}} \otimes M^{(n_X-1)}  \, ,
\end{multline}
where we applied successively $\frac{\mathds{1}^{X_{2n_X-1}}}{d_{X_{2n_X-1}}}\otimes $, $\DepPar{X_{2n_X-2}}{\cdot}$, and $\ProjOn{X}{n_X-1}{\cdot}$ on both sides to show that equivalence in the $n_X-1$ case implies the following relation in the $n_X$ case:%
\begin{multline}
    \TrX{X_{2n_X-1}}{M^{(n_X)}} =  \mathds{1}^{X_{2n_X-2}} \otimes M^{(n_X-1)} 	\Longrightarrow \\
    \Dep{X_{2n_X-1}}{M^{(n_X)}} = \DepPar{X_{2n_X-1}X_{2n_X-2}}{\ProjOn{X}{n_X-1}{M^{(n_X)}}}.
\end{multline}
This last relation is actually the recursive definition $\eqref{eq:PC}$ of the projector as it can be rephrased as
\begin{equation}
    \begin{split}
        &M^{(n_X)} - \Dep{X_{2n_X-1}}{M^{(n_X)}} \\
        &+ \DepPar{X_{2n_X-1}X_{2n_X-2}}{\ProjOn{X}{n_X-1}{M^{(n_X)}}} = M^{(n_X)}\,,\\
        &\equiv \ProjOn{\Party{X}{1}{}\prec \ldots \prec \Party{X}{n_X-1}{}\prec \Party{X}{n_X}{}}{n_X}{M^{(n_X)} }= M^{(n_X)} \,.
    \end{split}
\end{equation}
This yields the definition \eqref{eq:PC}.

To prove that this relation indeed defines a projector, notice the pattern in the prescripts of the recursive relation: let $\mathcal{P}_{n_X}^{\Party{X}{1}{}\prec \ldots \prec \Party{X}{n_X-1}{}\prec \Party{X}{n_X}{}} =  1 - B + ABC $, where $A \equiv \Dep{X_{2n_X-2}}{}$ and $B \equiv \Dep{X_{2n_X-1}}{} $ are idempotent elements, while $C \equiv \Proj{\Party{X}{1}{}\prec \ldots \prec \Party{X}{n_X-1}{}}{n_X-1}$. In the case $n_X=1$, $\Dep{C}{} =1$, which is idempotent, in the case $n_X=2$, $\Dep{C}{} = \Proj{\Party{X}{1}{}}{1}$, which has been proven idempotent as well. If we assume $C$ to be idempotent up to $n_X-1$, then for $n_X$:
\begin{equation}
\begin{aligned}
    \left( 1-B+ABC \right)^2 =& \left(1-B\right)^2 + 2 \left(1-B\right)ABC\\
     & \, + \left(ABC \right)^2 \\
    =& 1-B + 0 + ABC \, ,
\end{aligned}
\end{equation}
where we have used the distributive property of multiplication as well as idempotency of all elements to show that $1-B + ABC$ is idempotent as well. This proves that the objects built in equation \eqref{eq:PC} are idempotent for all $n_X$, thus they define projectors.

Hence, by induction we have that conditions \eqref{eq:app:QuComb} with \eqref{eq:PC} are necessary for \eqref{eq:QuComb_Chiri}. Sufficiency is also proven by induction. Suppose \eqref{eq:app:P^nX-1} holds. For the $n_X$ case we have the implication
\begin{multline}
    \ProjOn{\Party{X}{1}{}\prec \ldots \prec \Party{X}{n_X-1}{}\prec \Party{X}{n_X}{}}{n_X}{M^{(n_X)} }= M^{(n_X)}\\
    \iff\\
    \Dep{X_{2n_X-1}}{M^{(n_X)}} - \DepPar{X_{2n_X-1}X_{2n_X-2}}{\ProjOn{X}{n_X-1}{M^{(n_X)}}} =0 \\
    \Longrightarrow \\
    \TrX{X_{2n_X-1}}{M^{(n_X)} - \DepPar{X_{2n_X-2}}{\ProjOn{X}{n_X-1}{M^{(n_X)}}}} =0 ,
\end{multline}
yielding 
\begin{multline} 
\TrX{X_{2n_X-1}}{M^{(n_X)}} =\\
 \frac{\mathds{1}^{X_{2n_X-2}}}{d_{X_{2n_X-2}}} \otimes \TrX{X_{2n_X-1}X_{2n_X-2}}{\ProjOn{X}{n_X-1}{M^{(n_X)}}}\,.
\end{multline}%
Because of equations \eqref{eq:app:P^nX-1}, we must have that%
\begin{equation}
	\TrX{X_{2n_X-1}X_{2n_X-2}}{\ProjOn{X}{n_X-1}{M^{(n_X)}}} \propto M^{(n_X-1)}.
\end{equation}%
Tracing out both sides, we find that the proportionality constant has to be equal to $d_{X_{2n_X-2}}$ in order for condition \eqref{eq:app:QuComb_normalisation} to hold. Thus, $\TrX{X_{2n_X-1}}{M^{(n_X)}} =  \mathds{1}^{X_{2n_X-2}} \otimes M^{(n_X-1)}$ follows from \eqref{eq:app:QuComb}. Adding this condition to the $n_X-1$ other conditions $\TrX{X_{2j-1}}{M^{(j)}} = \mathds{1}^{X_{2j-2}} \otimes M^{(j-1)}$, $2\leq j \leq n_X-1$ and $\TrX{X_1}{M^{(1)}} = \mathds{1}^{X_0}$, that were already assumed in Eqs. \eqref{eq:app:P^nX-1}, gives the $n_X$ conditions \eqref{eq:QuComb_Chiri_subspace} for an deterministic quantum $n_X$-comb. Hence \eqref{eq:app:QuComb} with \eqref{eq:PC} implies \eqref{eq:QuComb_Chiri} by induction, proving sufficiency.

Note that Eq. \eqref{eq:PC} admits an \enquote{unravelled} formulation,
\begin{equation}\label{eq:PCunravelled}
    \Proj{X}{n_X} = \DepPar{\left( 1-X_{2n_X-1} \left( 1-X_{2n_X-2}  \big( \ldots \big( 1-X_{1} \big( 1-X_{0} \big) \big) \ldots \big) \right) \right)}{\cdot} \,,
\end{equation}
shortened in 
\begin{equation}
	\Proj{X}{n_X} = \mathcal{I} - \DepPar{X_{2n_X-1}}{ \mathcal{I}-\DepPar{X_{2n_X-2}}{ \Proj{X}{n_X-1} } } \,.
\end{equation}%
This will be useful in the proof of Theorem \ref{theo:complement} presented in the next section.

\section{Proof of Theorem \ref{theo:complement} \label{app:Complement}} 
As \eqref{eq:MPM_pos} is the same condition as \eqref{eq:pos}, the proof consists in showing that condition \eqref{eq:MPM_norm} is equivalent to conditions \eqref{eq:complProj} and \eqref{eq:norm}. 

The first step of the proof is to notice that since $\frac{\bigotimes_{X \in \mathfrak{N}} \mathds{1}^X}{\prod_{X\in\mathfrak{N}}\prod_{i=0}^{n_X-1} d_{X_{2i+1}}} \equiv \frac{\mathds{1}}{ d_{out}}$ is a valid tensor product of deterministic quantum combs, the normalisation \eqref{eq:MPM_norm} implies that the set $\{W\}$ have fixed trace norm,
\begin{equation}\label{eq:app:W_norm}
    \TrX{}{W\cdot \frac{\mathds{1}}{d_{out}}} = 1 \iff \TrX{}{W} =  d_{out} \,.
\end{equation}
This is condition \eqref{eq:norm}, which is thus necessary. It also implies that $W=\frac{\bigotimes_{X \in \mathfrak{N}} \mathds{1}^X}{\prod_{X\in\mathfrak{N}}\prod_{i=0}^{n_X-1} d_{X_{2i}}} \equiv \frac{\mathds{1}}{d_{in}}$ is an element of the set of valid MPMs since $\TrX{}{\frac{\mathds{1}}{ d_{in}}\cdot \bigotimes_{X\in\mathfrak{N}} M^X} = 1$ for all $M^X$. This actually corresponds to enforcing condition \eqref{eq:QuComb_norm} on each comb in the tensor product.

For proving the necessity of \eqref{eq:complProj} (and subsequently its sufficiency together with \eqref{eq:norm}), we will work in a convenient Hilbert-Schmidt (HS) basis $\{\sigma_i\}$ for an operator space of dimension $d^2$, whose elements satisfy
\begin{equation}\label{eq:HSbasis}
 \begin{gathered}
    \sigma_i^\dag = \sigma_i \, , \forall i \, ,\\
    \sigma_0 = \mathds{1}\, ,\\
    \TrX{}{\sigma_i} = 0 \, , \forall i \neq 0\, ,\\
    \TrX{}{\sigma_i \cdot \sigma_j} = d \, \delta_{i,j}\, ,\, \forall i,j\,,
\end{gathered} 
\end{equation}
where $\delta_{i,j}$ is the Kronecker delta.

An arbitrary operator $O\in \LinOp{\Hilb{X_0} \otimes \Hilb{X_1} \otimes \ldots}$ can always be expressed in a product basis of the kind described above: $O=\sum_{i_{0},i_{1}\ldots} o_{i_{0},i_{1}\ldots} \; \GGB{X_0}{i_{0}}\otimes \GGB{X_1}{i_{1}} \otimes \cdots$, where $\sigma_j^Y$ is the $j$-th element of a basis of subsystem $Y$ that respects \eqref{eq:HSbasis}, while the $o_{i_{0}i_{1}\ldots} \in \mathbb{C} \,, \forall i_0, \forall i_1,\ldots$ are the associated coefficients of the basis expansion, and the sum runs over all indices: $\sum_{i_{0},i_{1}\ldots} \equiv \sum_{i_0=0}^{d_{X_0}^2-1}\sum_{i_1=0}^{d_{X_1}^2-1}\cdots$. Let $\mathbf{i}=(i_{0},i_{1},\ldots)$ be a multi-index, ranging from 0 to $d_X^2-1$, with first element $\mathbf{0}\equiv (0,0,\ldots) $. Each sub-index corresponds to a subsystem  of X, \textit{i.e.} $i_0$ refers to the index of the $X_0$ component, $i_1$ to the one of $X_1$, etc... We use it to shorten the basis expansion: %
\begin{equation}
O=\sum_{i_{0}i_{1}\ldots} o_{i_{0}i_{1}\ldots} \; \GGB{X_0}{i_0}\otimes \GGB{X_1}{i_1} \otimes \ldots \equiv \sum_{\mathbf{i}} o_{\mathbf{i}} \; \GGB{X}{\mathbf{i}}\,.
\end{equation}

The action of the mapping \eqref{eq:Depolop} defined in section \ref{app:projo} on an operator is to remove certain terms in the basis expansion, as it is projecting on a subset of the basis elements of this type of basis. One can see it as if the superoperator was setting its corresponding sub-index in the HS basis to 0: $\Dep{X_0}{O}= \sum_{\mathbf{i}=(i_0=0,i_1,i_2,\ldots)} o_{\mathbf{i}} \; \GGB{X}{\mathbf{i}}$; the notation under the sum is to be understood as \enquote{summing only over all $\mathbf{i}$ for which $i_0=0$}. For example, $\DepPar{X_0}{\sum_{i_0} o_{i_0} \; \GGB{X_0}{i_0}} = o_{0} \; \GGB{X_0}{0}$, $\DepPar{X_0}{\sum_{i_0i_1} o_{i_0 i_1} \; \GGB{X_0}{i_0} \otimes \GGB{X_1}{i_1}} = \sum_{i_1} o_{0i_1} \; \GGB{X_0}{0} \otimes \GGB{X_1}{i_1}$, etc...

Actually, the projector on the comb validity subspace also shares this property, and so does any tensor product of several projectors of this kind (\textit{i.e.} the projector on $\mathrm{Span}\left\{\bigotimes M\right\}$). This is the content of the following lemma.
\begin{lemm} \label{lemm:P=remover}
    The (tensor product of) mapping(s) $\Proj{X}{n_X}$ defined as in Eq. \eqref{eq:PC} is a superoperator projector whose action is to remove certain basis elements in the Hilbert-Schmidt expansion of an arbitrary CJ operator.
\end{lemm}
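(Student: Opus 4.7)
The plan is induction on $n_X$, leveraging the fact that in any Hilbert-Schmidt basis satisfying \eqref{eq:HSbasis} every elementary superoperator $\DepPar{Y}{\cdot}$ acts diagonally with eigenvalues in $\{0,1\}$. Explicitly, from $\sigma_0 = \mathds{1}$ and $\TrX{}{\sigma_k} = 0$ for $k\neq 0$ one computes $\DepPar{Y}{\GGB{X}{\mathbf{i}}} = \delta_{i_Y, 0}\,\GGB{X}{\mathbf{i}}$, so $\DepPar{Y}{\cdot}$ keeps exactly those HS basis elements whose $Y$ sub-index equals $0$. Compositions (prescript products) therefore correspond to intersections of the selected sets and are themselves diagonal with eigenvalues in $\{0,1\}$.

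For the base case $n_X = 0$, $\Proj{}{0} = \mathcal{I}$ is trivially an HS-basis remover (it removes nothing). For the inductive step, assume $\Proj{X'}{n_X-1}$ is diagonal in the HS basis with $\{0,1\}$ eigenvalues that depend only on the sub-indices of $X'$. Then $\DepPar{X_{2n_X-2} X_{2n_X-1}}{\Proj{X'}{n_X-1}}$ is a composition of three $0/1$-diagonal projectors and is therefore itself $0/1$-diagonal, with image contained in the span of basis elements satisfying $i_{2n_X-1} = 0$. On the other hand, $\mathcal{I} - \DepPar{X_{2n_X-1}}{\cdot}$ has image contained in the span of basis elements satisfying $i_{2n_X-1} \neq 0$. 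These two images are thus orthogonal, and the two projectors commute (both being diagonal in the HS basis), so according to \eqref{eq:PC} their sum $\Proj{X}{n_X}$ is itself a projector that is $0/1$-diagonal in the HS basis, i.e.\ a basis-element remover.

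Finally, for a tensor product $\bigotimes_{X\in\mathfrak{N}}\Proj{X}{n_X}$ acting on disjoint subsystems, the overall eigenvalue on a product HS basis element factorises into the product of the individual party eigenvalues and hence still lies in $\{0,1\}$, so the tensor product projector is likewise a remover. The only subtle point in the argument is the orthogonality of images between $\mathcal{I} - \DepPar{X_{2n_X-1}}{\cdot}$ and $\DepPar{X_{2n_X-2} X_{2n_X-1}}{\Proj{X'}{n_X-1}}$, which is crucial to guarantee that their sum is a projector rather than merely a linear combination with eigenvalues outside $\{0,1\}$; this orthogonality is ensured precisely by the factor $\DepPar{X_{2n_X-1}}{\cdot}$ appearing as an outer depolarisation in both summands of \eqref{eq:PC}.
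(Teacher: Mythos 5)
Your proof is correct and follows essentially the same route as the paper's: both arguments rest on the observation that the elementary maps $\DepPar{Y}{\cdot}$ act diagonally on a Hilbert--Schmidt basis satisfying \eqref{eq:HSbasis} with eigenvalues in $\{0,1\}$, and then proceed by induction on $n_X$ (plus the trivial tensor-product step). The only packaging difference is that the paper first rewrites the recursion in the fully nested form \eqref{eq:PCunravelled} so that it needs only closure of the \enquote{remover} property under complements and products, whereas you keep the additive form of \eqref{eq:PC} and supply the (correct) extra observation that the two summands have orthogonal images, which is what guarantees their sum is again a $0/1$-diagonal projector.
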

\begin{proof}
    The proof is based on the following two observations: (a) Let $\Proj{}{}$ be a projector whose action in the HS basis expansion amounts to removing certain types of terms while leaving the others intact. Then, $\mathcal{I} - \Proj{}{}$ is a projector that also has this property. Indeed, it will leave the terms that are removed by $\Proj{}{}$ and will remove those that are left by $\Proj{}{}$.
    (b) Let $\Proj{}{}$ and $\Proj{'}{}$ be two projectors whose action in the basis expansion amounts to removing certain types of terms. Then the product $\Proj{'}{}\Proj{}{}$ is also a projector of this kind. Indeed, $\Proj{}{}$ would first remove the terms it removes, and then $\Proj{'}{}$ would remove those that it removes from what remains after the action of $\Proj{}{}$. Note that this observation implies in particular that if the claimed property holds for $\Proj{X}{}$ and $\Proj{'Y}{}$ that act on separate systems, it also holds for their tensor product $\left(\Proj{X}{}\otimes \mathcal{I}^Y\right)\left(\mathcal{I}^X \otimes \Proj{'Y}{}\right)$, since $\left(\Proj{X}{}\otimes \mathcal{I}^Y\right)$ and $\left(\mathcal{I}^X \otimes \Proj{'Y}{}\right)$ are obviously also projectors with this property.
    
    Consider the \enquote{unravelled} version of Eq. \eqref{eq:PC}, Eq. \eqref{eq:PCunravelled}. We have seen that $\Dep{X_0}{}$ is such a projector, hence $\Dep{1-X_0}{}$ is also a projector of this kind by (a), so is $\Dep{X_1\left(1-X_0\right)}{}$ by (b), and so is $\Dep{1-X_1\left(1-X_0\right)}{}$ by (a) again. This proves the lemma in the case $n_X=1$ (that is, for Eq. \eqref{eq:app:P^1}). The general case follows by induction: if it is true for $\Proj{X}{n_X-1}$, then it is true for $\DepPar{X_{2n_X-2}}{\Proj{X}{n_X-1}}$ by (b), which is true for  $\mathcal{I}-\DepPar{X_{2n_X-2}}{ \Proj{X}{n_X-1} }$ by (a), which is again true for $ \DepPar{X_{2n_X-1}}{ \mathcal{I}-\DepPar{X_{2n_X-2}}{ \Proj{X}{n_X-1} } }$ by (b), and which is true for $\mathcal{I} - \DepPar{X_{2n_X-1}}{ \mathcal{I}-\DepPar{X_{2n_X-2}}{ \Proj{X}{n_X-1} } }$ by (a). This is Eq. \eqref{eq:PCunravelled} for $n_X$, hereby proving the induction.
\end{proof}
For example, in the case of a single party with one operation $O= \sum_{i_0i_1} o_{i_0 i_1} \; \GGB{X_0}{i_0} \otimes \GGB{X_1}{i_1}$, one has $\ProjOn{X}{1}{O} = \Dep{1-X_1+X_0X_1}{O} = o_{00} \; \GGB{X_0}{0}\otimes \GGB{X_1}{0} + \sum_{\mathbf{i}=(i_0,i_1>0)} o_{i_0i_1} \; \GGB{X_0}{i_0}\otimes \GGB{X_1}{i_1}$; the removed terms have indices $\mathbf{i} = (i_0>0,i_1=0)$. 

In the general case for a single party, we will write the action of a projector on the basis expansion as $\ProjOn{X}{n_X}{\sum_{\mathbf{i}} o_{\mathbf{i}}\;\sigma_{\mathbf{i}}^X} = \sum_{\mathbf{i} \in \{\{M^X\}\}} o_{\mathbf{i}}\; \sigma_{\mathbf{i}}^X$, where $\sum_{\mathbf{i} \in \{\{M^X\}\}}$ is to be understood as \enquote{summing only over the basis elements that belong to the subspace in which the set of $n_X$-combs $\{M^X\}$ is defined}. In the 1-node example of above, $\mathbf{i} \in \{\{M^X\}\}$ is thus equivalent to $\mathbf{i} \in \{(0,0)\} \cup \{(i_0,i_1)\}_{i_0=0,i_1>0}^{d^2_{X_0}-1,d^2_{X_1}-1}$. 

We next observe that the subspace spanned by the operators that are a tensor product of valid deterministic combs for the separate parties is the subspace on which the projector $\Proj{}{} \equiv \bigotimes_{X\in\mathfrak{N}} \Proj{X}{n_X}$ projects. Indeed, the action of $\mathcal{P}$ on an operator is to remove all terms in its Hilbert-Schmidt expansion that are not tensor products of terms allowed (left intact) by the local projectors $\Proj{X}{n_X}$. In other words, $\Proj{}{}$ projects on the subspace spanned by the tensor products of all locally allowed basis elements. Since each locally allowed basis element can be expressed as a linear combination of local deterministic combs, every operator in the subspace on which $\Proj{}{}$ projects is of the form $M= \sum_k q_k \left( \bigotimes_X M^X \right)_k$, where $M^X$ is a valid deterministic comb associated with party $X$, and $q_k$ are real coefficients. Conversely, every operator of this form is obviously left invariant by $\mathcal{P}$.
The trace of this last expression gives $\TrX{}{M} = \sum_k q_k \left( \prod_{X\in\mathfrak{N}} d_{X_{in}} \right)_k$, where for each $k$ the terms in parenthesis are exactly the input dimension of the Hilbert space, so they can be factored out of the sum: $\TrX{}{M} = d_{in} \sum_k q_k$. Hence, requiring that the trace is equal to $d_{in} \equiv \prod_{X\in\mathfrak{N}} d_{X_{in}}$ is equivalent to requiring that $\sum_k q_k=1$. Therefore, conditions \eqref{eq:AffineCombs} are equivalent to requiring that $M$ is an affine sum of tensor products of deterministic quantum combs. 

We will now prove the necessity of \eqref{eq:complProj} by using this observation to compute the quasiorthogonal projector explicitly. Let $M$ be an operator on $\LinOp{\Hilb{}}$ that satisfies conditions \eqref{eq:AffineCombs}. It is written as
\begin{equation}
    M= \bigotimes_{X\in \mathfrak{N}} \left( \sum_{\mathbf{i}_X \in \{\{M^X\}\}} m_{\mathbf{i}_X}\; \GGB{X}{\mathbf{i}_X} \right) \equiv \sum_{\mathbf{i}\in \{\{M\}\}} m_{\mathbf{i}}\; \GGB{}{\mathbf{i}} \,,
\end{equation}
where we have introduced a shortened formulation by making the tensor product implicit. Let there be an arbitrary operator $W = \sum_{\mathbf{j}} w_{\mathbf{j}} \sigma_{\mathbf{j}}$, where $\mathbf{j}$ is a multi-index defined analogously to $\mathbf{i}$. Applying the normalisation condition \eqref{eq:MPM_norm} on $W$ is equivalent to normalising it on all affine sums of tensor product of deterministic combs (see main text), hence Eq. \eqref{eq:MPM_norm_affine} gives
\begin{equation}\label{eq:Tr(WM)}
    \TrX{}{W\cdot M} = \sum_{\mathbf{j}} \sum_{\mathbf{i} \in \{\{M\}\}} w_{\mathbf{j}} m_{\mathbf{i}}\; d \;\delta_{{\mathbf{j}},\mathbf{i}} = 1\,.
\end{equation}
Since $\TrX{}{W} = w_{\mathbf{0}}d$ and $\TrX{}{M} = m_{\mathbf{0}}d$, the values of these two coefficients are fixed using Eqs. \eqref{eq:app:QuComb_normalisation} and \eqref{eq:app:W_norm} obtained above. This allows us to write
\begin{equation}
\begin{aligned}
    \sum_{\mathbf{j}} \sum_{\mathbf{i} \in \{\{M\}\}} w_{\mathbf{j}} m_{\mathbf{i}} d \delta_{\mathbf{j},\mathbf{i}} &= \sum_{\mathbf{i} \in \{\{M\}\}} w_{\mathbf{i}} m_{\mathbf{i}} d \\
    &= w_{\mathbf{0}}m_{\mathbf{0}} d +  \sum_{\mathbf{i} \neq \mathbf{0} \in \{\{M\}\}} w_{\mathbf{i}} m_{\mathbf{i}} d \\
    &= 1 + \sum_{\mathbf{i} \neq \mathbf{0} \in \{\{M\}\}} w_{\mathbf{i}} m_{\mathbf{i}} d\,,
\end{aligned}
\end{equation}
turning the normalisation condition into
\begin{equation}\label{eq:app:normWM}
    \sum_{\mathbf{i} \neq \mathbf{0} \in \{\{M\}\}} w_{\mathbf{i}} m_{\mathbf{i}} d = 0\,.
\end{equation} 
As $w_{\mathbf{i}}, m_{\mathbf{i}} \in \mathbb{R} \; \forall \mathbf{i}$ (since $W$ and $M$ are Hermitian), as $d$ is a non-negative integer, and as in general $m_{\mathbf{i}} \neq 0$ for at least one given value of $\mathbf{i} \neq \mathbf{0}$ (otherwise we are back to the case $M= \frac{\mathds{1}}{\prod d_{X_{out}}}$), we are left with two possibilities: either $\sum_{\mathbf{i} \neq \mathbf{0} \in \{\{M\}\}} w_{\mathbf{i}} m_{\mathbf{i}}=0$ for some nonvanishing $w_{\mathbf{i}}$, or $ w_{\mathbf{i}} = 0 \,, \forall \mathbf{i} \neq \mathbf{0} \in \{\{M\}\}$. We will show that only the second possibility is viable.

Following Ref. \cite{OG2016}, one can always construct a valid deterministic quantum comb of the form $M_{\mathbf{a}} = m_{\mathbf{0}} \GGB{}{{\mathbf{0}}} + m_{\mathbf{a}} \GGB{}{\mathbf{a}}$ for $\mathbf{a} \neq {\mathbf{0}} \in \{\{M\}\} $ with a small enough $ m_{\mathbf{a}}$ coefficient so that Eqs. \eqref{eq:QuComb} are satisfied (it is sufficient to set $m_{\mathbf{0}}$ to $\left(\prod d_{X_{out}}\right)^{-1}$ and $m_{\mathbf{a}}^2 \leq m_{\mathbf{0}}^2$). Then, equation \eqref{eq:app:normWM} yields $w_{\mathbf{a}} = 0$. Doing it again with a different valid comb $M_{\mathbf{b}} = m_{\mathbf{0}} \GGB{}{{\mathbf{0}}} + m_{\mathbf{b}} \GGB{}{\mathbf{b}}$ such that $\mathbf{b} \neq \mathbf{a}$ yields a second condition, $w_{\mathbf{b}} = 0$. Repeating this argument for all possible choice of basis element, $\left\{ M_{\mathbf{k}} = m_{\mathbf{0}} \GGB{}{{\mathbf{0}}} + m_{\mathbf{k}} \GGB{}{\mathbf{k}} \right\}_{\mathbf{k}\neq{\mathbf{0}} \in \{\{M\}\}}$, proves the latter possibility: $w_{\mathbf{i}} = 0 \,, \forall \; {\mathbf{i}} \neq {\mathbf{0}} \; \in \{\{M\}\}$. 

Note that the last condition defines a subspace in the space of operators to which $W$ must belong. This subspace is quasiorthogonal to the subspace $\mathrm{Span}\left\{ \bigotimes M\right\}$, meaning that the intersection of the two subspaces is the span of the unit matrix. Noticing that $w_{\mathbf{0}} \sigma_{\mathbf{0}} = \Dep{X}{W}\equiv \mathcal{D}[W]$, one expresses this last statement in a basis-independent formulation:
\begin{multline}\label{eq:W}
   \{W\}: \sum_{\mathbf{i} \in \{\{M\}\}} w_{\mathbf{i}} \GGB{}{\mathbf{i}} = w_{\mathbf{0}} \GGB{}{{\mathbf{0}}} \\
   \iff \left( \Proj{}{} - \mathcal{D}\right)[W]= 0 \,.
\end{multline}
One can check that $\left( \Proj{}{} - \mathcal{D}\right)^2 = \Proj{}{} - \mathcal{D}$, so $\Proj{}{} - \mathcal{D}$ is a projector. Taking the orthogonal complement, one gets
\begin{equation}
    W = \left(\mathcal{I} - \mathcal{P} + \mathcal{D} \right) [W] \,, \label{eq:oui}
\end{equation}
which is the sought formula.

The fact that Eqs. \eqref{eq:complProj} and \eqref{eq:norm} are sufficient for \eqref{eq:MPM_norm_affine} (and therefore \eqref{eq:MPM_norm}) is straightforward to check. Indeed, plugging \eqref{eq:W}, which is equivalent to \eqref{eq:complProj}, together with \eqref{eq:norm} in Eq. \eqref{eq:Tr(WM)}, one verifies that $m_{\mathbf{0}} \frac{d}{d_{in}} = 1$ because Eq. \eqref{eq:AffineCombs_norm} implies that $m_{\mathbf{0}} = 1/d_{out}$. This completes the proof of the theorem.
\section{Proof of Theorem \ref{theo:PV=UPTildeC} \label{app:proof:V=UC}}
The proof will be split among several lemmas. Given a finite set of orthogonal projectors $\Proj{1}{},\Proj{2}{},\ldots$, we call their \textit{intersection} the orthogonal projector $\Proj{}{\Proj{1}{}\cap\Proj{2}{} \cap \ldots}$ that projects on the intersection of the subspaces on which $\Proj{1}{},\Proj{2}{},\ldots$ project. As shown in \cite{Piziak1999}, if the projectors commute, this intersection is given by the \enquote{product} of the projectors (formally, their composition): 
\begin{equation}
    \Proj{}{\Proj{1}{}\cap\Proj{2}{} \cap \ldots} \equiv \Proj{1}{}\cap\Proj{2}{} \cap \ldots = \Proj{1}{}\Proj{2}{}\ldots \quad .
\end{equation}
Let $\Proj{A\prec B}{n_A+n_B}$ be the projector on the space of valid $(n_A+n_B)$-combs formed by composing $n_A$ operations (or \enquote{teeth}) of a party $A$ with the $n_B$ operations of another party $B$ into an overall deterministic $(n_A+n_B)$-comb where the operations of Alice are all before those of Bob (see the second case from the left in Fig. \ref{fig:CombCompo} for a graphical example). Let $\Proj{B \prec A}{n_A+n_B}$ be the same kind of comb projector but where the operations of $B$ are put before those of $A$ (rightmost case in Fig. \ref{fig:CombCompo}). Then the intersection of these two projectors is
\begin{equation}\label{eq:proj_inter}
    \Proj{A\prec B}{n_A+n_B} \cap \Proj{B \prec A}{n_A+n_B} = \Proj{A\prec B}{n_A+n_B}\Proj{B \prec A}{n_A+n_B} \, .
\end{equation}

%
\begin{figure}
    \centering
    \includegraphics[width=.9\linewidth]{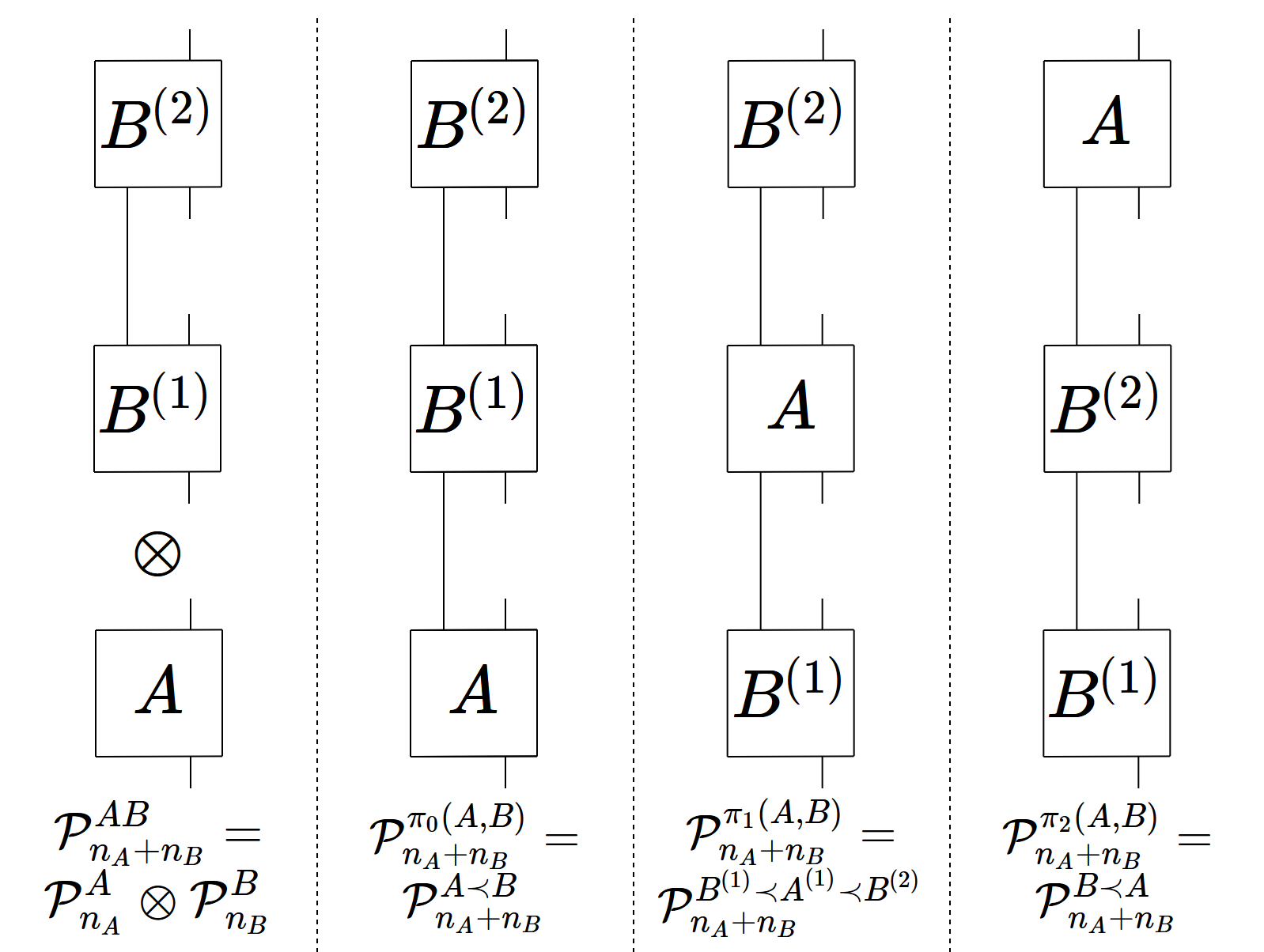}
    \caption{Examples of ways to append $n_A=1$ operations of Alice with $n_B=2$ operations of Bob into valid 3-combs, with the associated projectors on a particular subspace of valid 3-comb written below its graphical representation.}
    \label{fig:CombCompo}
\end{figure}

Any comb in the intersection should therefore be compatible with either all the operations of Alice being first or those of Bob. Intuitively, one can conceive that the only kind of combs valid within this requirement are those where the part of Alice is forbidden to communicate with the part of Bob and vice versa. This is the content of the first lemma.
\begin{lemm} \label{lem:tensorcombs}
    Consider a quantum comb in which some of the teeth are associated with a party $A$ and the others with $B$. Then, the intersection of the subspace of deterministic quantum combs in which the teeth of $A$ are all before those of $B$ (\textit{e.g.} Fig. \ref{fig:CombCompo}, second case from the left) with the one in which the teeth of $B$ are all before those of $A$ (Fig. \ref{fig:CombCompo}, rightmost case) is equivalent to $\mathrm{Span}\left\{ M^A \otimes M^B \right\}$, which is the subspace spanned by the tensor product of a smaller deterministic comb acting on the nodes of $A$ only together with a smaller deterministic comb acting on the nodes of $B$ only (Fig. \ref{fig:CombCompo}, leftmost case). This is because their projectors are equivalent:
    \begin{equation}
        \Proj{A\prec B}{n_A+n_B}\Proj{B \prec A}{n_A+n_B} = \Proj{A}{n_A} \otimes \Proj{B}{n_B} \,.%
        \label{eq:lem:V=UC.1}%
    \end{equation}
\end{lemm}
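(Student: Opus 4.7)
The plan is to prove Eq.~\eqref{eq:lem:V=UC.1} by showing that both sides are orthogonal projectors whose ranges coincide. Lemma~\ref{lemm:P=remover} tells us that each of the three projectors $\Proj{A\prec B}{n_A+n_B}$, $\Proj{B \prec A}{n_A+n_B}$, and $\Proj{A}{n_A}\otimes\Proj{B}{n_B}$ is diagonal in a product Hilbert--Schmidt basis $\{\sigma_{\mathbf{i}}\}$: each acts by preserving some basis elements and annihilating the rest. In particular, the two comb projectors on the left-hand side commute, which is precisely the commutation already invoked in Eq.~\eqref{eq:proj_inter} to identify their composition with the projector onto the intersection of their ranges. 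The tensor-product projector on the right-hand side is likewise the orthogonal projector onto the span of the basis elements it preserves. The proof therefore reduces to identifying, for each of the three projectors, the set of preserved basis elements, and checking that the two descriptions agree.

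The first substantive step is to derive a closed-form characterisation of the basis elements in the range of $\Proj{X}{n_X}$ for the natural ordering $\Party{X}{1}{}\prec\ldots\prec\Party{X}{n_X}{}$. Using the unravelled recursion \eqref{eq:PCunravelled} and induction on $n_X$, I expect to establish the following criterion: $\sigma_{\mathbf{i}}$ lies in the range of $\Proj{X}{n_X}$ iff either $\mathbf{i}=\mathbf{0}$, or the largest index $k$ with $i_{X_k}\neq 0$ is odd (that is, $X_k$ is the dual of an output system). The base case $n_X=1$ is immediate from $\Proj{X}{1}=\DepPar{1-X_1+X_0X_1}{\cdot}$, whose forbidden terms are precisely those with $i_{X_0}\neq 0$ and $i_{X_1}=0$. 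The inductive step appends the fresh pair $(X_{2n_X-2},X_{2n_X-1})$ via the recursion \eqref{eq:PC}.

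The second step is a short case analysis on the resulting criterion. In the ordering $A\prec B$ the chain of subsystem indices is $A_0<A_1<\ldots<A_{2n_A-1}<B_0<\ldots<B_{2n_B-1}$, so a product basis term $\sigma^A_{\mathbf{i}_A}\otimes\sigma^B_{\mathbf{i}_B}$ is preserved precisely when either everything is trivial, or the last nonzero index read off this chain is odd; the analogous statement holds for $B\prec A$ with the roles of $A$ and $B$ swapped. Intersecting the two conditions, a quick case split on which of $\mathbf{i}_A$ and $\mathbf{i}_B$ is nonzero shows that what survives is: if $\mathbf{i}_A\neq\mathbf{0}$ then its last nonzero index is odd, and \emph{independently} if $\mathbf{i}_B\neq\mathbf{0}$ then its last nonzero index is odd. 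This is exactly the characterisation of the range of $\Proj{A}{n_A}\otimes\Proj{B}{n_B}$, so the two orthogonal projectors have the same range and therefore coincide.

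The main obstacle I anticipate is the inductive characterisation in the first step. The three contributions $\mathcal{I}$, $-\DepPar{X_{2n_X-1}}{\cdot}$ and $+\DepPar{X_{2n_X-2}X_{2n_X-1}}{\Proj{X'}{n_X-1}}$ in \eqref{eq:PC} interact in a way that requires care: one has to track which basis terms survive after the three pieces add and cancel, depending on whether $i_{X_{2n_X-1}}$ and $i_{X_{2n_X-2}}$ vanish and on whether the restriction to $X_0\ldots X_{2n_X-3}$ is compatible with the inductive hypothesis for the smaller $(n_X-1)$-comb. Once this bookkeeping is completed, the remainder is essentially combinatorial, and the same template should iterate to give the multi-party version of the statement needed later for Theorem~\ref{theo:PV=UPTildeC}.
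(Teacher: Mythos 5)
Your proof is correct, but it takes a genuinely different route from the paper's. The paper proceeds purely algebraically in the prescript ring: it first establishes, by induction on $n_B$, the operator identity $\Proj{A\prec B}{n_A+n_B} = \mathcal{I}^A \otimes \Proj{B}{n_B} - \left(\mathcal{I}^A - \Proj{A}{n_A}\right)\otimes\mathcal{D}^B$, writes the analogous decomposition for $\Proj{B\prec A}{n_A+n_B}$, and multiplies the two out, all cross terms being annihilated because $\mathcal{D}$ is absorbing and $\left(\mathcal{I}^X-\Proj{X}{}\right)\mathcal{D}^X=0$. You instead invoke Lemma \ref{lemm:P=remover} to reduce everything to comparing the sets of preserved Hilbert--Schmidt basis elements. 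Your key combinatorial criterion --- $\sigma_{\mathbf{i}}$ survives $\Proj{X}{n_X}$ iff $\mathbf{i}=\mathbf{0}$ or the highest nontrivial subsystem in the chain is a dual-of-output (odd) slot --- is correct: from the telescoping alternating sum in the unravelled form \eqref{eq:PCunravelled}, the coefficient retained on $\sigma_{\mathbf{i}}$ is $\sum_{j=0}^{2n_X-k-1}(-1)^j$ with $k$ the largest nonzero sub-index, which equals $1$ for $k$ odd and $0$ for $k$ even; this settles the induction you flag as the main obstacle. Your subsequent case split is also sound: when both $\mathbf{i}_A$ and $\mathbf{i}_B$ are nontrivial, the $A\prec B$ condition constrains only the tail of $\mathbf{i}_B$ and the $B\prec A$ condition only the tail of $\mathbf{i}_A$, so the intersection decouples into the two local conditions defining $\Proj{A}{n_A}\otimes\Proj{B}{n_B}$. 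As for what each approach buys: the paper's intermediate identity \eqref{eq:app:projAB} is a reusable algebraic fact and makes the multiparty extension a one-line associativity remark, whereas your route yields an explicit description of the comb subspace in terms of allowed basis types (in the spirit of the original term-type characterisation of process matrices), which also renders Corollary \ref{coro:V=UC.1} transparent. The only thing missing for a complete write-up is carrying out the induction you sketch, which, as shown above, is routine.
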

\begin{proof}
    The proof relies upon the recursive formulation of a quantum comb projector \eqref{eq:PC}. When inspecting formula \eqref{eq:PCunravelled} for $\Proj{A\prec B}{n_A+n_B}$, we see that it is possible to express the projector as: 
    \begin{equation}\label{eq:app:projAB}
        \Proj{A\prec B}{n_A+n_B} = \mathcal{I}^A \otimes \Proj{B}{n_B} - \left(\mathcal{I}^A - \Proj{A}{n_A}\right) \otimes \mathcal{D}^B \,.
    \end{equation}
    Indeed, it holds for the case $n_B=1$:
    \begin{equation}
    \begin{aligned}
        \Proj{A\prec B}{n_A+1} &= \DepPar{1-B_{1}}{\cdot} + \DepPar{B_0B_1}{\Proj{A}{n_A}}\\
        &=\DepPar{1-B_{1}+B_0B_1}{\cdot} - \DepPar{B_0B_1}{1-\Proj{A}{n_A}}\\
        &=\mathcal{I}^A \otimes \Proj{B}{1} - \DepPar{B_0B_1}{\mathcal{I}^A - \Proj{A}{n_A}}\\
        &=\mathcal{I}^A \otimes \Proj{B}{1} - \left(\mathcal{I}^A - \Proj{A}{n_A}\right) \otimes \mathcal{D}^B \,.
    \end{aligned}
    \end{equation}
    Suppose the decomposition \eqref{eq:app:projAB} is true for $n_A+n_B-1$ operations. We define the projector for such a case as $\Proj{A\prec B}{n_A+n_B-1} \equiv \Proj{A\prec \Party{B}{1}{} \prec \ldots \prec \Party{B}{n_B-1}{}}{n_A+n_B-1}$. Then, for $n_A+n_B$:
    \begin{multline}
        \Proj{A\prec B}{n_A+n_B} = \DepPar{1-B_{2n_B-1}}{\cdot} + \DepPar{B_{2n_B-2}B_{2n_B-1}}{\Proj{A\prec B}{n_A+n_B-1}}\\
        = \DepPar{1-B_{2n_B-1}}{\cdot} + \DepPar{B_{2n_B-2}B_{2n_B-1}}{\mathcal{I}^A \otimes \Proj{B}{n_B-1}}\\
        - \DepPar{B_{2n_B-2}B_{2n_B-1}}{\left(\mathcal{I}^A - \Proj{A}{n_A}\right) \otimes \mathcal{D}^B_{n_B-1}}\\
        = \mathcal{I}^A \otimes \Proj{B}{n_B} - \left(\mathcal{I}^A - \Proj{A}{n_A}\right) \otimes \mathcal{D}^B_{n_B}\,,
    \end{multline}
    where we injected $\Proj{A\prec B}{n_A+n_B-1} = \mathcal{I}^A \otimes \Proj{B}{n_B-1} - \left(\mathcal{I}^A - \Proj{A}{n_A}\right) \otimes \mathcal{D}^B_{n_B-1}$ to go from the first equality to the second, with $\mathcal{D}^B_{n_B-1} \equiv \Dep{B_0\ldots B_{2n_B-3}}{}$. Next, we used Eq. \eqref{eq:PC} together with $\DepPar{B_{2n_B-2}B_{2n_B-1}}{\mathcal{D}^B_{n_B-1}} = \mathcal{D}^B$ to go to the last equality. This proves decomposition \eqref{eq:app:projAB} by induction.
    
    Now, applying an analogous decomposition to $\Proj{B\prec A}{n_A+n_B}$, the left-hand side of Eq. \eqref{eq:lem:V=UC.1} becomes
    \begin{multline}
        \Proj{A \prec B}{n_A+n_B}\Proj{B \prec A}{n_A+n_B} =\\
        \left( \mathcal{I}^A \otimes \Proj{B}{n_B} - \left(\mathcal{I}^A - \Proj{A}{n_A}\right) \otimes \mathcal{D}^B \right)\\
        \times \left( \Proj{A}{n_A} \otimes \mathcal{I}^B - \mathcal{D}^A \otimes \left(\mathcal{I}^B - \Proj{B}{n_B}\right) \right)\\
        = \Proj{A}{n_A} \otimes \Proj{B}{n_B}\,. 
    \end{multline}
    To go to the last line, we used the fact that $\mathcal{D}$ is the absorbing (\textit{i.e.} \enquote{zero}) element of the multiplication in the ring (see Sec. \ref{app:projo}), making all the terms where it appears vanish, since $\left(\mathcal{I}^X - \Proj{X}{}\right)\mathcal{D}^X = \left( \mathcal{D}^X - \mathcal{D}^X\right) = 0$. 
\end{proof}
%
The projector of this first lemma is simply the projector on a subspace containing all the $(n_A+n_B)-$combs obtained by taking the tensor product of an $n_A$-comb with an $n_B$-comb (\textit{e.g.} leftmost case in Fig. \ref{fig:CombCompo}). It should also be compatible with permutations of Bob's and Alice's teeth in the full comb that respect the local causal ordering for each party. We will need the $\pi_i(A,B)$ notation introduced in Sec. \ref{sec:MPM=Combs} to refer to the $i$-th valid permutation in the causal ordering of the teeth of parties $A$ and $B$ in order to express this as a corollary.
\begin{coro}\label{coro:V=UC.1}
The subspace made by the intersection of the $(n_A+n_B)$-combs such that all the teeth of $A$ are before those of $B$ with the $(n_A+n_B)$-combs such that all the teeth of $B$ are before those of $A$ is inside the subspace of all $(n_A+n_B)$-combs whose teeth orderings are permutations of the teeth of $A$ and $B$ that respect the local order assumed for $A$ and $B$:   
    \begin{equation}\label{eq:coro:V=UC.1}
        \Proj{A\prec B}{n_A+n_B}\Proj{B \prec A}{n_A+n_B}\Proj{\pi_i(A, B)}{n_A+n_B} = \Proj{A\prec B}{n_A+n_B}\Proj{B \prec A}{n_A+n_B}\, \forall i \,.
    \end{equation}
\end{coro}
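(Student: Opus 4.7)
The plan is to reduce the corollary to a subspace inclusion using Lemma \ref{lem:tensorcombs}, and then establish that inclusion by induction on the total number of nodes.

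First, applying Lemma \ref{lem:tensorcombs} to the left-hand side of \eqref{eq:coro:V=UC.1} transforms the claim into $(\Proj{A}{n_A} \otimes \Proj{B}{n_B}) \, \Proj{\pi_i(A,B)}{n_A+n_B} = \Proj{A}{n_A} \otimes \Proj{B}{n_B}$. By Lemma \ref{lemm:P=remover}, both projectors appearing here are orthogonal projectors of the same ``basis-remover'' type, and hence commute. Consequently this identity is equivalent to the subspace inclusion $\mathrm{Im}\left(\Proj{A}{n_A} \otimes \Proj{B}{n_B}\right) \subseteq \mathrm{Im}\left(\Proj{\pi_i(A,B)}{n_A+n_B}\right)$. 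By linearity it suffices to show that any simple tensor $M^A \otimes M^B$, with $M^A$ a deterministic $n_A$-comb on $A$'s nodes and $M^B$ a deterministic $n_B$-comb on $B$'s nodes, is itself a deterministic $(n_A+n_B)$-comb with node ordering $\pi_i(A,B)$.

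Second, I verify this using the Chiribella--D'Ariano--Perinotti nested-trace characterisation \eqref{eq:QuComb_Chiri}. Positivity is automatic since the tensor product of positive operators is positive. The chain of trace conditions is established by induction on $n = n_A + n_B$, with trivial base case. For the inductive step, because $\pi_i$ respects the local orderings of $A$ and $B$, the last tooth in $\pi_i$ must be either $\Party{A}{n_A}{}$ or $\Party{B}{n_B}{}$; without loss of generality take it to be $\Party{A}{n_A}{}$. The outermost trace condition of the combined comb then reads
\begin{equation*}
\TrX{A_{2n_A-1}}{M^A \otimes M^B} = \left(\mathds{1}^{A_{2n_A-2}} \otimes \tilde{M}^A\right) \otimes M^B ,
\end{equation*}
which holds because $M^A$ satisfies the corresponding trace condition for an $n_A$-comb, with $\tilde{M}^A$ a deterministic $(n_A-1)$-comb. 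The remaining trace conditions for the ordering $\pi_i$ restricted to the first $n-1$ teeth are exactly those required for $\tilde{M}^A \otimes M^B$ to be an $(n-1)$-comb with the reduced ordering $\pi_i'$, and follow from the induction hypothesis applied to the parties $A'$ (with $n_A - 1$ nodes) and $B$.

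The main obstacle I anticipate is the index bookkeeping inside the induction: one must verify that after removing the last tooth of $\pi_i$ the restricted ordering $\pi_i'$ remains a valid permutation compatible with the local orders of the reduced party collection, and that the nested trace conditions on $M^A \otimes M^B$ match exactly those required of the reduced tensor product comb. Once this is checked, positivity and the recursive structure of \eqref{eq:QuComb_Chiri_subspace} make the argument close cleanly, and taking unions over all valid $\pi_i$ combined with the quasiorthogonal characterisation of the MPM subspace (Theorem \ref{theo:complement}) then yields Theorem \ref{theo:PV=UPTildeC}.
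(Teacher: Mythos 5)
Your proof is correct and follows the same route as the paper: reduce via Lemma \ref{lem:tensorcombs} to the statement that a tensor product of deterministic combs is a valid deterministic comb for any interleaving of the teeth compatible with the local orders. The paper simply cites this last fact to Chiribella et al.\ \cite{Chiribella2009} in a one-line proof, whereas you supply the explicit induction on the nested trace conditions \eqref{eq:QuComb_Chiri_subspace}; your inductive step (peeling off the last tooth of $\pi_i$, which must be $\Party{A}{n_A}{}$ or $\Party{B}{n_B}{}$) is sound, and the commutation needed to pass from the projector identity to the subspace inclusion is indeed guaranteed by Lemma \ref{lemm:P=remover}.
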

\begin{proof}
    The subset of deterministic quantum combs on $\LinOp{\Hilb{A} \otimes \Hilb{B}}$ formed by the tensor product of combs on $A$ and $B$ is automatically a valid comb, no matter how one defines the relative ordering between the teeth of $A$ and $B$ \cite{Chiribella2009}. 
\end{proof}
In other words, this corollary says that when an operator is a tensor product of a deterministic $n_A$-comb with a deterministic $n_B$-comb, then it is a valid deterministic $(n_A+n_B)$-comb for any ordering of the teeth compatible with the partial ordering of the individual combs.

The second lemma needed is just a simplification of the formula for the union of projectors when applied to the case of quasiorthogonal projectors for different teeth orderings. 
\begin{lemm}\label{lemm:UPtilde=1-P+D}
    The union of two arbitrary quasiorthogonal projectors $\CompProj{\pi_i(A,B)}{n_A+n_B}$ and $\CompProj{\pi_j(A,B)}{n_A+n_B}$ is given by%
    \begin{equation}\label{eq:lem:V=UC.2}
        \CompProj{\pi_i(A, B)}{n_A+n_B} \cup \CompProj{\pi_j(A, B)}{n_A+n_B} = \mathcal{I} - \Proj{\pi_i(A, B)}{n_A+n_B} \Proj{\pi_j(A, B)}{n_A+n_B} + \mathcal{D} \,.
    \end{equation}%
\end{lemm}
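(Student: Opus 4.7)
The plan is to prove Lemma \ref{lemm:UPtilde=1-P+D} by direct algebraic manipulation within the ring of projectors described in Appendix \ref{app:projo}. The key ingredients are: (i) the union formula for commuting projectors given in Eq. \eqref{eq:UprojCtilde}, (ii) the fact that $\mathcal{D}$ is the absorbing element, so that $\mathcal{D}\Proj{}{}=\Proj{}{}\mathcal{D}=\mathcal{D}$ for any projector $\Proj{}{}$ in the ring, and in particular $\mathcal{D}^2=\mathcal{D}$, and (iii) the commutativity of the comb projectors $\Proj{\pi_i(A,B)}{n_A+n_B}$ and $\Proj{\pi_j(A,B)}{n_A+n_B}$, which follows from the fact that the addition and multiplication of prescripts are commutative, so any two elements built from \eqref{eq:PC} and \eqref{eq:PCunravelled} commute; this commutativity then lifts to their quasiorthogonal complements $\CompProj{}{}=\mathcal{I}-\Proj{}{}+\mathcal{D}$.

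First I would apply the union formula \eqref{eq:UprojCtilde} to the two quasiorthogonal projectors,
\begin{equation*}
    \CompProj{\pi_i}{} \cup \CompProj{\pi_j}{} = \CompProj{\pi_i}{} + \CompProj{\pi_j}{} - \CompProj{\pi_i}{}\,\CompProj{\pi_j}{},
\end{equation*}
where I have shortened the notation by dropping the subscript $n_A+n_B$ and writing $\pi_i$ in place of $\pi_i(A,B)$. Next, I would substitute the definition $\CompProj{\pi_k}{}=\mathcal{I}-\Proj{\pi_k}{}+\mathcal{D}$ and expand the product $\CompProj{\pi_i}{}\CompProj{\pi_j}{}$. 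The cross-terms involving $\mathcal{D}$ collapse by absorbency: the only surviving terms beyond $\mathcal{I}$, $-\Proj{\pi_i}{}$, $-\Proj{\pi_j}{}$, $\Proj{\pi_i}{}\Proj{\pi_j}{}$ combine into a single $\mathcal{D}$ contribution, yielding $\CompProj{\pi_i}{}\CompProj{\pi_j}{}=\mathcal{I}-\Proj{\pi_i}{}-\Proj{\pi_j}{}+\Proj{\pi_i}{}\Proj{\pi_j}{}+\mathcal{D}$. Substituting this back into the union formula and collecting like terms, the $\mathcal{I}$ and $\mathcal{D}$ coefficients simplify, as do the $\Proj{\pi_i}{}$ and $\Proj{\pi_j}{}$ terms, leaving exactly $\mathcal{I}-\Proj{\pi_i}{}\Proj{\pi_j}{}+\mathcal{D}$, which is the desired expression \eqref{eq:lem:V=UC.2}.

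The only nontrivial point in the argument is justifying that the union formula \eqref{eq:UprojCtilde} applies, i.e.\ that the two quasiorthogonal projectors commute; this reduces to showing that $\Proj{\pi_i}{}$ and $\Proj{\pi_j}{}$ commute, which is immediate from the prescript-ring structure since both are built from partial depolarizations on the same subsystems and those mutually commute. Everything else is bookkeeping that exploits $\mathcal{D}\Proj{}{}=\mathcal{D}$ to absorb the $\mathcal{D}$ cross-terms. I do not anticipate any real obstacle; the main care is simply to apply the absorbency rule uniformly and to check that the final resulting operator is indeed an orthogonal projector (equivalently, idempotent), which can be verified a posteriori using the same ring rules.
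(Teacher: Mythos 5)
Your proposal is correct and follows essentially the same route as the paper's proof: apply the union formula \eqref{eq:UprojCtilde}, expand the product of the two quasiorthogonal projectors, and use the absorbing property of $\mathcal{D}$ to collapse the cross-terms. Your extra remark justifying commutativity of the comb projectors (needed for \eqref{eq:UprojCtilde} to apply) is a point the paper leaves implicit, but it does not change the argument.
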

\begin{proof}
    It is proven by doing the computation explicitly, then simplifying by using again the fact that the projector $\mathcal{D}$ on the span of unity is the zero element. Hence,
	\begin{equation}    
    \begin{aligned}
        \CompProj{\pi_i}{} \CompProj{\pi_j}{} &= \left(\mathcal{I} - \Proj{\pi_i}{} + \mathcal{D}\right) \left(\mathcal{I} - \Proj{\pi_j}{} + \mathcal{D}\right)\\
        &= \mathcal{I} - \Proj{\pi_i}{} - \Proj{\pi_j}{} + \Proj{\pi_i}{}\Proj{\pi_j}{} + \mathcal{D}\,,
    \end{aligned}    
    \end{equation}
    where we have dropped the $n_A+n_B$ subscript and simplified the superscript (\textit{i.e.} $\pi_i(A, B) \equiv \pi_i$ ) in order to shorten the notation. This gives, when plugged into \eqref{eq:UprojCtilde},
    \begin{equation}
    \begin{aligned}
        \CompProj{\pi_i}{} \cup \CompProj{\pi_j}{} &= \mathcal{I} - \Proj{\pi_i}{} + \mathcal{D} + \mathcal{I} - \Proj{\pi_j}{} + \mathcal{D} \\
        & \quad - \left(\mathcal{I} - \Proj{\pi_i}{}  - \Proj{\pi_j}{} + \Proj{\pi_i}{}\Proj{\pi_j}{} + \mathcal{D} \right)\\
        &= \mathcal{I} - \Proj{\pi_i}{} \Proj{\pi_j}{} + \mathcal{D}\,.
    \end{aligned}    
    \end{equation}
\end{proof}

All the elements needed for the main proof are in place. Note that the lemmas have been derived assuming only 2 parties. Actually, these hold analogously for any number of parties, since every formula that was used is associative. In the case where $|\mathfrak{N}|>2$, one just has to group the parties together to get back to the 2-partite scenario.

For the main proof, we start from the right-hand side of equation \eqref{eq:PV=UPTildeC} to which we apply the result of Lemma \ref{lemm:UPtilde=1-P+D}, equation \eqref{eq:lem:V=UC.2}:
\begin{equation}
    \bigcup_{\pi_i} \CompProj{\pi_i(A,B, \ldots)}{n_A+n_B+ \ldots} = \mathcal{I} - \prod_{\pi_i} \Proj{\pi_i(A,B,\ldots)}{n_A+n_B+ \ldots} +\mathcal{D} \, .
\end{equation}
We then apply Corollary \ref{coro:V=UC.1} so the permutations $\pi_i$ of all the nodes are restricted to permutations $\pi_i'$ of whole parties only:
	\begin{equation}
	\begin{aligned}
    \mathcal{I} - \prod_{\pi_i'} \Proj{\pi_i'(A,B, \ldots)}{n_A+n_B+ \ldots} &+\mathcal{D} =\\
    \mathcal{I} - &\Proj{A\prec B\prec \ldots}{n_A+n_B+ \ldots}\Proj{B \prec A\prec \ldots}{n_A+n_B+ \ldots}... +\mathcal{D} \,.
	\end{aligned}    
    \end{equation}
Remember that here a superscript $X$ is referring to $n_X$ nodes like $X \equiv \Party{X}{1}{}\prec\Party{X}{2}{}\prec \ldots \prec \Party{X}{n_X}{}$.
The final step is to apply Lemma \ref{lem:tensorcombs}, equation \eqref{eq:lem:V=UC.1}, yielding
\begin{multline}
    \mathcal{I} - \Proj{A\prec B\prec \ldots}{n_A+n_B+ \ldots}\Proj{B \prec A\prec \ldots}{n_A+n_B+ \ldots}\ldots +\mathcal{D} =\\
    \mathcal{I} - \left(\Proj{A}{n_A}\otimes \Proj{B}{n_B}\otimes \ldots\right) +\mathcal{D} \,.   
\end{multline}
We can now invoke the definition of the quasiorthogonal projector \eqref{eq:complProj} on the right-hand side, and we have proven that 
\begin{equation}
    \bigcup_{\pi_i} \CompProj{\pi_i(A,B, \ldots)}{n_A+n_B+ \ldots} = \prescript{Q}{}{\left(\Proj{A}{n_A}\otimes\Proj{B}{n_B}\otimes\ldots\right)}\quad .
\end{equation}
%
\section{Details for the activation example\label{app:exemple}}
As an illustration of Theorem \ref{theo:complement}, we will now sketch the proof that the operator \eqref{eq:Activable}, rewritten here for convenience
\begin{multline}
    W^{AB}=\\
    \frac{1}{8}\left(\mathds{1}+\frac{1}{\sqrt{2}}\left[\GGB{A_0}{x}\GGB{A_2}{z}\GGB{A_3}{z}\GGB{B_0}{z} + \GGB{A_0}{z}\GGB{A_2}{z}\GGB{B_1}{z}\right]\right)\,,
\end{multline}
is a valid MPM with 3 nodes, where $\Party{A}{1}{}\prec \Party{A}{2}{}$.

This is a PSD matrix with trace $\TrX{}{W^{AB}} = 8 = d_{A_1}d_{A_3}d_{B_1} = d_{out}$, so conditions \eqref{eq:pos} and \eqref{eq:norm} are directly verified. The projective condition \eqref{eq:complProj} is given by 
\begin{align}
    &W^{AB}=\mathcal{I}^{AB} \left[ W^{AB} \right]- \notag\\
    &\DepPar{\left(1-A_3+A_2A_3-A_1A_2A_3+A_0A_1A_2A_3\right)\left(1-B_1+B_0B_1\right)}{W^{AB}} \notag\\ 
    &+ \mathcal{D}^{AB} \left[ W^{AB} \right] \,, \label{eq:WAB}
\end{align}
where \eqref{eq:ProjComb} have been used to find the 2-comb projector on $A$, $\Proj{\Party{A}{1}{}\prec \Party{A}{2}{}}{2}= \Dep{\left(1-A_3+A_2A_3-A_1A_2A_3+A_0A_1A_2A_3\right)}{}$, as well as the 1-comb projector on $B$, $\Proj{B}{1}=\Dep{\left(1-B_1+B_0B_1\right)}{}$.
Notice that $\DepPar{A_3}{W^{AB}} = \frac{1}{8}\left(\mathds{1}+\frac{1}{\sqrt{2}} \GGB{A_0}{z}\GGB{A_2}{z}\GGB{B_1}{z}\right)$, and $\DepPar{B_1}{W^{AB}} = \frac{1}{8}\left(\mathds{1}+\frac{1}{\sqrt{2}}\GGB{A_0}{x}\GGB{A_2}{z}\GGB{A_3}{z}\GGB{B_0}{z} \right)$, while all the other combinations of prescripts are equivalent to the action of $\mathcal{D}^{AB}$ on the matrix, \textit{i.e.} $\DepPar{A_3B_1}{W^{AB}} = \DepPar{B_0B_1}{W^{AB}} = \DepPar{A_2A_3}{W^{AB}}=\ldots = \frac{\mathds{1}}{8} = \mathcal{D}^{AB} \left[ W^{AB} \right]$. This allows us to simplify the projective condition \eqref{eq:WAB} into
\begin{equation}
    W^{AB} = W^{AB} - \DepPar{1-A_3-B_1}{W^{AB}} - \mathcal{D}^{AB} \left[ W^{AB} \right]\,,
\end{equation}
which is effectively verified:
\begin{multline}
    W^{AB} =\DepPar{A_3}{W^{AB}} + \DepPar{B_1}{W^{AB}} - \mathcal{D}^{AB} \left[ W^{AB} \right]\\
    =\frac{\mathds{1}}{4}+\frac{1}{8\sqrt{2}}\left[\GGB{A_0}{x}\GGB{A_2}{z}\GGB{A_3}{z}\GGB{B_0}{z} + \GGB{A_0}{z}\GGB{A_2}{z}\GGB{B_1}{z}\right] - \frac{\mathds{1}}{8}\\
    =\frac{1}{8}\left(\mathds{1}+\frac{1}{\sqrt{2}}\left[\GGB{A_0}{x}\GGB{A_2}{z}\GGB{A_3}{z}\GGB{B_0}{z} + \GGB{A_0}{z}\GGB{A_2}{z}\GGB{B_1}{z}\right]\right).
\end{multline}

The same way, operator \eqref{eq:Activated}, 
\begin{multline}
    W= \\
    \frac{1}{8}\left(\mathds{1}+\frac{1}{\sqrt{2}}\left[\GGB{A_2}{z}\GGB{A_3}{z}\GGB{L_2}{x}\GGB{B_0}{z} + \GGB{A_2}{z}\GGB{L_2}{z}\GGB{B_1}{z}\right]\right) \,,
\end{multline}
written here without superscript, will now be proven to be a valid (M)PM on 2 nodes $A$ and $B$, where $L_2$ is extending the input system of party $A$. It is straightforward to check that it is a PSD matrix with $\TrX{}{W} = 4 = d_{A_3}d_{B_1}$, hence verifying \eqref{eq:pos} and \eqref{eq:norm}. The projective condition \eqref{eq:complProj} is expressed as
\begin{multline}
     W =\\
      \mathcal{I} \left[ W \right]-\DepPar{\left(1-A_3+L_2A_2A_3\right)\left(1-B_1+B_0B_1\right)}{W} +\mathcal{D} \left[ W\right] = \\
     \DepPar{A_3+B_1-A_3B_1+A_3B_0B_1-B_0B_1+L_2A_2A_3B_1-L_2A_2A_3}{W}\,. \label{eq:non}
\end{multline}
Computing each term on the right hand side, we find
\begin{equation}
\begin{aligned}
    &\Dep{A_3}{W} = \frac{1}{8}\left(\mathds{1}+\frac{1}{\sqrt{2}} \GGB{A_2}{z}\GGB{L_2}{z}\GGB{B_1}{z}\right) ,\\
    &\Dep{B_1}{W} = \frac{1}{8}\left(\mathds{1}+\frac{1}{\sqrt{2}}\GGB{A_2}{z}\GGB{A_3}{z}\GGB{L_2}{x}\GGB{B_0}{z} \right) ,\\
    &\Dep{A_3B_1}{W} = \Dep{A_3B_0B_1}{W}  = \Dep{L_2A_2A_3B_1}{W} = \Dep{L_2A_2A_3}{W} =\\
    &\quad\Dep{B_0B_1}{W} = \frac{\mathds{1}}{8}\,.
\end{aligned}
\end{equation}
Therefore, \eqref{eq:non} becomes
\begin{multline}
    W = \Dep{A_3}{W}+\Dep{B_1}{W}-\Dep{A_3B_1}{W}\\
    = \frac{\mathds{1}}{4}+\frac{1}{8\sqrt{2}} \left[\GGB{A_2}{z}\GGB{L_2}{z}\GGB{B_1}{z}+\GGB{A_2}{z}\GGB{A_3}{z}\GGB{L_2}{x}\GGB{B_0}{z}\right]- \frac{\mathds{1}}{8}\\
    = \frac{1}{8}\left(\mathds{1}+\frac{1}{\sqrt{2}}\left[\GGB{A_2}{z}\GGB{A_3}{z}\GGB{L_2}{x}\GGB{B_0}{z} + \GGB{A_2}{z}\GGB{L_2}{z}\GGB{B_1}{z}\right]\right) \,, 
\end{multline}
proving that the projective condition \eqref{eq:complProj} is indeed verified for $W$.

\end{document}